\documentclass[12pt]{article}

\usepackage{amsthm,amsmath,mathrsfs,amsfonts,amssymb,mathtools,nicefrac}
\usepackage{bbm}
\usepackage{tikz}
\usetikzlibrary{calc}

\usepackage{graphicx,caption,subcaption}

\usepackage{natbib}
\usepackage{hyperref}
\hypersetup{
    breaklinks=true,
    colorlinks,
    citecolor=blue,
    filecolor=black,
    linkcolor=blue,
    urlcolor=black
}
\usepackage{url}

\usepackage[multiple]{footmisc} \usepackage[capposition=top]{floatrow}

\usepackage{enumerate}

\newcommand{\argmin}{\operatorname*{argmin}}

\newtheorem{lemma}{Lemma}

\theoremstyle{definition}
\newtheorem{example}{Example}
\newtheorem{remark}{Remark}
\usepackage[margin=1.25in]{geometry}
\usepackage{setspace}
\usepackage{lmodern}

\begin{document}

\def\spacingset#1{\renewcommand{\baselinestretch}{#1}\small\normalsize} \spacingset{1}

\title{\vspace{-1cm}{\bf  \Large Orthogonal Moments in Likelihood Models}\thanks{Manuscript prepared for the Journal of Applied Econometrics lecture delivered at the annual conference of the International Association for Applied Econometrics in Lisbon, Portugal, in June 2026. We thank our discussants Iv\'an Fern\'andez-Val and Chris Muris for insightful comments. \newline Funded by the European Union (ERC-NETWORK-101044319) and by the French Government and the French National Research Agency under the Investissements d'Avenir program (ANR-17-EURE-0010). Views and opinions expressed are however those of the authors only and do not necessarily reflect those of the European Union or the European Research Council. Neither the European Union nor the granting authority can be held responsible for them. }}
\author{ St\'ephane Bonhomme\thanks{sbonhomme@uchicago.edu} \\ {\small Department of Economics, University of Chicago} \and  Koen Jochmans\thanks{koen.jochmans@tse-fr.eu} \\  {\small Toulouse School of Economics, Universit\'e Toulouse Capitole} \and Martin Weidner\thanks{m.weidner@ucl.ac.uk} \\{\small Department of Economics, University College London}}
\date{\small September 2026}
\maketitle

\vspace{-.5cm}
\begin{abstract}
\noindent Many models, such as fixed-effect models for panel or network data, are hard to estimate because they feature nuisance parameters that are both numerous and estimated imprecisely. This, in general, causes an incidental-parameter problem in the estimator of the parameters of interest. The problem can be alleviated by working with an estimating equation whose expectation is insensitive to the value of the nuisance parameters. We discuss and contrast three notions of insensitivity, also called \emph{orthogonality}, in the context of likelihood models: Neyman orthogonality,  Neyman orthogonality to order $q$, and full orthogonality. Orthogonal moments are obtained by projecting the estimating equation on nested subspaces, which are spanned by, respectively, the scores of the nuisance parameters, the first $q$ derivatives of the likelihood ratio with respect to the nuisance parameters, and all likelihood ratios of the model. We give explicit constructions in binary-choice, count-data, and nonlinear regression models.

\medskip

\noindent{\bf JEL codes:} C13, C23, C55

\noindent{\bf Keywords:} bias correction, incidental parameters, network data, orthogonality, panel data
\end{abstract}

\onehalfspacing

\renewcommand{\theequation}{\arabic{section}.\arabic{equation}}

\section{Introduction\label{sec_intro}}
\setcounter{equation}{0}

In many estimation problems the parameters can be divided into two groups. The first group contains the target parameter; it answers the question of interest. The second group consists of nuisance parameters; they are not of primary interest but have to be estimated along the way. In program evaluation where the target is an average treatment effect the nuisance parameters may describe a propensity score. In structural settings they may be the parameters of the underlying model needed to generate policy counterfactuals. In the panel and network settings that we will focus on here they are fixed effects, one or more of them for each individual, firm, or network node. 

Identification of the target parameter is typically based on a moment condition in which the nuisance parameters appear. A feasible estimating equation then requires these to be replaced by an estimator. When nuisance parameters are numerous or imprecisely estimated, as in panel and network data models with fixed effects, the estimating equation is generally biased and inference may be substantially affected.

A popular response is to replace or modify the moment function to render it less sensitive to the nuisance parameters, in a well-defined sense. In this paper we review and connect three such possibilities. The first one is \emph{Neyman orthogonality}, introduced in \citet{Neyman1959}. It requires the first derivative of the expected moment with respect to the nuisance parameters to be zero at the truth. This zero expected-Jacobian condition is at the heart of the modern double/debiased machine-learning (DDML) literature as in \cite{ChernozhukovChetverikovDemirerDufloHansenNeweyRobins2018}. It introduces enough insensitivity to permit the target parameter to be estimated at the parametric rate while allowing for (sample-split) first-step estimators that converge at a rate faster than $n^{-\nicefrac{1}{4}}$, where $n$ is the sample size (see, e.g., \citealp{Newey1994}).

In this paper we focus on models that have a likelihood structure, where the density of the data depends on a fixed-dimensional parameter vector, a vector of (many) nuisance  parameters, and possibly a set of exogenous covariates. In this setting a Neyman-orthogonal moment can be constructed by projecting the original moment function onto the orthogonal complement of the score of the nuisance parameters, as \citet{Neyman1959} originally proposed. The DDML approach to estimation then consists of generalized method of moments (GMM) estimation based on such a Neyman-orthogonal moment, with a sample-split estimate of the nuisance parameter plugged in.

However, the faster-than-$n^{-\nicefrac{1}{4}}$ convergence may be too strong a requirement in practice, when the nuisance parameters are too many or too hard to estimate. As discussed in \citet{bonhomme2024neyman}, standard panel data settings with $N$ units, $T$ periods, and $n=NT$ observations are a case in point. Indeed, unit fixed effects are estimated at the $T^{-\nicefrac{1}{2}}$ rate, which is not asymptotically negligible relative to $n^{-\nicefrac{1}{4}}=(NT)^{-\nicefrac{1}{4}}$ unless $N$ is negligible relative to $T$; that is, unless the panel is so long that there is no incidental parameter problem to start with.

In panel data, it is sometimes possible to difference out the fixed effects entirely, as happens in linear panel data models (e.g., \citealp{arellano2003panel}, \citealp{chamberlain1992efficiency}) and certain classes of nonlinear models where quasi-differencing approaches are available (e.g., \citealp{arellano2001panel}). This gives rise to moment functions whose expectation is equal to zero irrespective of the value of the nuisance parameters, and to estimators that are consistent for fixed $T$ as $N$ tends to infinity without the need for first-step estimates of nuisance parameters or sample splitting. In panel data, informative such moments do not necessarily exist (\citealp{chamberlain2010binary}). A general approach to find them is functional differencing, as introduced in \citet{bonhomme2012functional}.

Motivated by these differencing approaches in panel data settings, the second notion of insensitivity that we consider here is \emph{full orthogonality}, in the sense that the expected moment is zero at every value of the nuisance parameters. We show that the projection approach of functional differencing can be applied to general nuisance parameters, not only fixed effects in panel data. This broadens the observation in \citet{bonhomme2024functional}, who consider functional differencing in network settings. In contrast to Neyman orthogonality, full orthogonality is a global property. We show that fully orthogonal moments can be obtained by projecting the original moment function, similarly to Neyman's original proposal, with the difference that the projection is orthogonal to the closed span of likelihood ratios under the model, hence infinite-dimensional in general.

The third notion, \emph{Neyman orthogonality to order $q$} \citep{MackeySyrgkanisZadik2018}, lies between the other two. Here, all derivatives to order $q$ of the moment vanish at true parameter values. This notion is a local one, and it coincides with usual Neyman orthogonality when $q=1$. While estimation using order-$q$ orthogonal moments requires first-step estimates of nuisance parameters and sample splitting, the rate requirement is less restrictive compared to Neyman orthogonality: a rate faster than $n^{-\nicefrac{1}{2(q+1)}}$ is sufficient to ensure asymptotic unbiasedness. For example, in panel data settings where unit fixed effects are estimated at the $T^{-\nicefrac{1}{2}}$ rate, the condition requires $N$ to be negligible relative to $T^q$. This is less restrictive the larger $q$ is, and it corrects for first-order incidental parameter bias as soon as $q\geq 2$. This property is similar to large-$N,T$ bias reduction approaches in the panel literature (see, e.g., \citealp{HahnNewey2004} and \citealp{DhaeneJochmans2015b,DhaeneJochmans2015a}).

An explicit construction of such order-$q$ orthogonal moment functions is given in \citet{bonhomme2024neyman}. It consists in projecting the original moment function orthogonally to the span of the first $q$ elements of the \cite{Bhattacharyya1946} basis (which are generalized score functions). This construction generalizes the projected-score approach (\citealp{SmallMcLeish1989}, \citealp{WatermanLindsay1996}) to general moments and target parameters, and it can be interpreted as constructing a higher-order influence function (\citealp{RobinsLiTchetgenTchetgenvanderVaart2008}) in models with a likelihood structure.

\begin{figure}
\caption{Achieving  orthogonality by projection\label{fig_graph}}
\begin{center}
	\resizebox{0.98\textwidth}{!}{%
		\begin{tikzpicture}[line join=round]

			\begin{scope}[xshift=0cm]
				\node[font=\bfseries] at (0.75,4.3) {\small Neyman orthogonality};
				\draw[teal!60!black,line width=3pt,opacity=0.85] (-1.2,-0.36) -- (2.7,0.81);
				\node[teal!45!black] at (2.45,1.18) {$\mathcal S$};
				\draw[->,black!55,line width=1pt] (0,0) -- (1.80,0.54);
				\draw[->,blue!70!black,line width=2pt] (0,0) -- (1.0,3.2);
				\node[blue!70!black] at (1.2,3.4) {$u$};
				\draw[->,orange!90!black,line width=2pt,dashed] (1.80,0.54) -- (1.0,3.2);
				\node[orange!85!black] at (1.62,1.95) {$u^{*}$};
				\fill (0,0) circle (1.6pt);
				\node at (0.3,-1.45) {\footnotesize$\mathcal S={\operatorname{span}}\{\nabla_{\eta}\ln f\}$};
						\end{scope}

			\begin{scope}[xshift=6.5cm]
				\node[font=\bfseries] at (0.45,4.3) {\small order-$q$ orthogonality};
				\fill[teal!18] (-0.6,-0.55)--(2.4,0.35)--(1.5,1.95)--(-1.5,1.05)--cycle;
				\draw[teal!55!black,line width=1pt] (-0.6,-0.55)--(2.4,0.35)--(1.5,1.95)--(-1.5,1.05)--cycle;
				\node[teal!45!black] at (-0.55,0.45) {$\mathcal S$};
				\draw[->,black!55,line width=1pt] (0,0) -- (1.0,1.10);
				\draw[->,blue!70!black,line width=2pt] (0,0) -- (1.0,3.2);
				\node[blue!70!black] at (1.2,3.4) {$u$};
				\draw[->,orange!90!black,line width=2pt,dashed] (1.0,1.10) -- (1.0,3.2);
				\node[orange!85!black] at (1.32,2.15) {$u^{*}$};
				\fill (0,0) circle (1.6pt);
				\node at (0.3,-1.45) {\footnotesize$\mathcal S={\operatorname{span}}\{\nabla_{\eta}^{j} f/f,\text{ for } j=1,...,q\}$};
			\end{scope}

			\begin{scope}[xshift=13cm]
				\node[font=\bfseries] at (0.45,4.3) {\small Full orthogonality};
				\fill[teal!18] (-1.1,-0.75)--(3.0,0.55)--(2.0,2.65)--(-2.1,1.35)--cycle;
				\draw[teal!55!black,line width=1pt] (-1.1,-0.75)--(3.0,0.55)--(2.0,2.65)--(-2.1,1.35)--cycle;
				\node[teal!45!black] at (-1.05,0.55) {$\mathcal S$};
				\draw[->,black!55,line width=1pt] (0,0) -- (1.0,1.95);
				\draw[->,blue!70!black,line width=2pt] (0,0) -- (1.0,3.2);
				\node[blue!70!black] at (1.2,3.4) {$u$};
				\draw[->,orange!90!black,line width=2pt,dashed] (1.0,1.95) -- (1.0,3.2);
				\node[orange!85!black] at (1.32,2.6) {$u^{*}$};
				\fill (0,0) circle (1.6pt);
				\node at (0.0,-1.45) {\footnotesize$\mathcal S=\overline{\operatorname{span}}\{f_{\eta'}/f_{\eta},\text{ for all }\eta'\}$};
			\end{scope}
			
	\end{tikzpicture}}
\end{center}
\vskip 0.2cm
\begin{flushleft}
	{\footnotesize \textit{Notes:} $u$ denotes the original moment, $u^*$ the orthogonal moment, ${\cal{S}}$ is the space that defines the projection: scores with respect to the nuisance parameter $\eta$ for Neyman orthogonality, derivatives of likelihood ratios (or Bhattacharyya basis) for order-$q$ orthogonality, and likelihood ratios for full orthogonality. All these concepts will be defined in detail in the text.}
\end{flushleft}
\end{figure}

In our likelihood setting, each of the three notions of orthogonality arises from the same operation. In each case, we project the original moment function $u$ on a subspace ${\cal{S}}$ spanned by functions of the data and keep the residual $u^*$, which is the orthogonal moment. We provide a graphical illustration in Figure \ref{fig_graph}. For Neyman orthogonality, the subspace ${\cal{S}}$ is the span of the nuisance score. For orthogonality to order $q$ it is the span of the first $q$ generalized score functions. For full orthogonality it is the closed span of all likelihood ratios of the model. The three subspaces are nested, and so the trade-off between each of the three forms of orthogonality is easy to see: the larger the subspace, the more of the original moment is removed. This increases robustness to estimation noise in the nuisance parameters but, at the same time, decreases the information about the target parameter that remains present. This geometric principle allows us to connect seemingly-different procedures that have appeared in the literature. 

While our aim is largely expository, several of the connections we draw appear to be new. We characterize full orthogonality as orthogonality to the closed span of the model's likelihood ratios, which yields a projection-based construction of fully-orthogonal moments in general likelihood models that generalizes the scope of the functional differencing approach. We show that this construction coincides with Neyman's projection device, applied to a semiparametric mixture model in which the distribution of the nuisance parameter is itself the nuisance parameter. We also show that, under an analyticity condition, a moment that is orthogonal to every order $q$ is fully orthogonal, so that the three notions can be viewed as a single increasing sequence of projections. Finally, we show that the orthogonality notions and projections can be naturally adapted to handle more general target parameters such as average effects, for example.

Throughout the presentation we use several examples: a binary-choice logit model, a panel data Poisson model, and a nonlinear regression model with Gaussian errors. The logit case is sufficiently general to cover the classic fixed-effect panel-data model going back to \cite{rasch1961general}, the $\beta$-model of network formation (\citealp{bmodel}, \citealp{graham2017econometric}), as well as generic high-dimensional logistic regression \citep{sur2019modern}. An interesting feature of this model is that all projections can be written down in closed form. For example, Neyman orthogonalization amounts to variance-weighted demeaning, whereas higher-order orthogonalization consists in projecting on powers of the sufficient statistic. For the panel logit in particular, the construction naturally comes to an end when $q$ is equal to the length of the panel, in which case it delivers the conditional-logit score of \citet{rasch1961general}, \citet{andersen1970asymptotic}, and \citet{chamberlain1980analysis}. The panel Poisson example shows a case where Neyman orthogonality automatically achieves full orthogonality. Lastly, the nonlinear regression example differs from the other two in that it belongs to a \emph{curved} exponential family (\citealp{efron1975defining}), and does not possess a low-dimensional sufficient statistic. Yet, higher-order projections remain tractable in this case.

\section{Three notions of orthogonality\label{sec_notions}}
\setcounter{equation}{0}

\subsection{Moment conditions and plug-in bias\label{sec_plugin}}

We will work in a setting where we are interested in a target parameter $\theta_0\in\Theta$, known to be identified through a vector of $d_u$ moment conditions
\begin{equation}
	\mathbb{E}\left[u(y,x,\theta_0,\eta_0)\right]=0.
	\label{eq_mom}
\end{equation}
Here the vectors $y$ and $x$ are data and represent outcomes and regressors, respectively, $\eta_0\in{\cal{H}}\subseteq\mathbb{R}^{d_{\eta}}$ is a nuisance parameter, and the function $u$ is known. Estimation problems of this kind are common. For example, a propensity score or a conditional mean of potential outcomes plays the role of $\eta_0$ in program evaluation, where an average treatment effect may be the target $\theta_0$. Difference-in-differences settings, both linear and nonlinear, collect unit and time effects into $\eta_0$ and a policy parameter into $\theta_0$. In structural analysis $\eta_0$ would usually contain the parameters of the underlying structural model, whereas $\theta_0$ would then be a policy counterfactual of interest. In applications, estimands that depend on $\eta_0$, such as average treatment effects or welfare quantities, are also of interest, and we will study them in Section \ref{sec_average}.

Consider a sample of independent observations $(y_i,x_i)$, and let $\widehat\eta$ be an estimator of the nuisance parameter. It is then natural to estimate $\theta_0$ by the \emph{plug-in} generalized method of moments (GMM) estimator \citep{hansen1982large} 
\begin{equation}
	\widehat\theta= \underset{\theta\in\Theta}{\argmin}\, \left\|\sum_i u(y_i,x_i,\theta,\widehat\eta)\right\|^2_{\Omega},
	\label{eq_GMM}
\end{equation}
for a chosen positive-definite weight matrix $\Omega$, where $\|a\|^2_{\Omega}=a^{\top}\Omega a$. We denote the total number of observations as $n$.\footnote{In $N\times T$ panel-data settings an  observation is the set of $T$ time-series observations on a given unit, out of a total of $N$ units. Thus, in that case (\ref{eq_GMM}) concerns $N$ summands, whereas the total sample size is $n=NT$. In a single network involving $N$ nodes the sum collapses to a single term, whereas the number of observations is $n=N(N-1)/2$ (in the undirected case). All rate conditions below are written in terms of $n$.}

However, estimation noise in $\widehat{\eta}$ introduces bias in the estimating equation for $\theta_0$, in that
\begin{equation*}
	\mathbb{E}\left[u(y,x,\theta_0,\widehat{\eta})\right]
    \neq 0.
\end{equation*}
This bias can be substantial when $\widehat\eta$ is very imprecise; for example, this occurs when $\eta_0$ is high-dimensional as in models with fixed effects. Bias in the moment equation generally implies that the plug-in estimator $\widehat\theta$ is biased as well. This, then, invalidates conventional inference procedures based on it.

One way to proceed is to work with a modified version of the moment function $u$, say $u^*$, which still satisfies
\begin{equation}
	\mathbb{E}\left[u^*(y,x,\theta_0,\eta_0)\right]=0,
	\label{eq_mom_star}
\end{equation}
but which at the same time is less sensitive to estimation noise in the nuisance parameter. Then, estimation is simply based on (\ref{eq_GMM}) using $u^*$ instead of $u$. We will consider three notions of insensitivity, or \emph{orthogonality}.

\subsection{Definitions\label{sec_three}}

Each of the three notions of orthogonality that we consider delivers a modified estimating equation. We will denote these as $u_1^*$ for Neyman orthogonality, $u_q^*$ for orthogonality to order $q$, and $u^*_\infty$ for full orthogonality. In addition to \eqref{eq_mom_star}, these functions are designed to satisfy, respectively,
\begin{equation}
	\mathbb{E}\left[\nabla_{\eta}\, u_1^*(y,x,\theta_0,\eta_0)\right]=0,
	\label{eq_neyman1}
\end{equation}
where $\nabla_\eta u$ is the $d_{\eta}\times d_u$ matrix of partial derivatives in $\eta$,
\begin{equation}
	\mathbb{E}\left[\nabla_{\eta}^{j}\, u_q^*(y,x,\theta_0,\eta_0)\right]=0 \qquad j=1,...,q,
	\label{eq_neymanq}
\end{equation}
where $\nabla^{j}_\eta u $ denotes the matrix of partial derivatives at order $j$, again with $d_u$ columns, and, finally,
\begin{equation}
	\mathbb{E}\left[ u_{\infty}^*(y,x,\theta_0,\eta)\right]=0\quad \text{ for all }\eta\in\mathcal{H}.
	\label{eq_full}
\end{equation}
Neyman orthogonality, in (\ref{eq_neyman1}), is the weakest of the three requirements, and general recipes exist to transform a given $u$ into $u_1^*$. Full orthogonality, in (\ref{eq_full}), in contrast is the strongest. However, informative fully-orthogonal moments do not always exist. Orthogonality to a certain order $q$, in (\ref{eq_neymanq}), then interpolates between these two cases. We will study the three notions of orthogonality in Sections \ref{sec_neyman}, \ref{sec_full}, and \ref{sec_higher}, respectively, within the scope of the likelihood framework, as in \citeauthor{Neyman1959}'s \citeyearpar{Neyman1959} original contribution. We describe this framework next.

\subsection{Model and examples\label{sec_ex}}

In the conditional likelihood framework, the distribution of $y$ given $x$ belongs to the family  $$\lbrace f(y\,|\, x;\theta,\eta) : \theta\in \Theta, \eta\in\mathcal{H} \rbrace$$ and so is known up to values of the parameters $\theta$ and $\eta$. We will write $\mathbb{E}_{x;\theta,\eta}$ to denote expectations under $f(\cdot \,|\, x;\theta,\eta)$ and, as before, indicate true parameter values by $\theta_0$ and $\eta_0$. Here, the parameter of interest is $\theta_0$ and one  natural option for the function $u$ would be the score function $\nabla_\theta \ln f$, where we use $\nabla_\theta$ to denote the partial-derivative operator with respect to $\theta$. Other choices for $u$ are possible, for example when the score is difficult to calculate.

We now introduce the three examples we use as illustrations: a logistic regression model, a Poisson model with fixed effects, and a nonlinear regression model.

\begin{example}[Logit]\label{example_1} Consider a generic logistic regression 
\begin{equation}
	y = \boldsymbol{1}\left\{x_1\theta_0 + x_2\eta_0 + \varepsilon \geq 0\right\},\qquad \varepsilon\mid x_1,x_2\ \sim\ \text{i.i.d. Logistic},
	\label{eq_plrlogit}
\end{equation}
where $y\in\{0,1\}^n$, $x_1\in\mathbb{R}^{n\times d_\theta}$, $x_2\in\mathbb{R}^{n\times d_\eta}$, and the indicator acts coordinate-wise. We focus on situations where $d_{\theta}$ is moderate and $d_{\eta}$ is large, and here our target parameter is $\theta_0$. A useful property of the model is that $s=x_{2}^{\top}y$ is a sufficient statistic for $\eta_0$ at a given $\theta_0$. Model (\ref{eq_plrlogit}) nests several popular specifications in the literature, as we now illustrate. 

A first special case of model (\ref{eq_plrlogit}) is the panel binary-choice model. Given a random sample on individuals $i=1,\ldots,N$, followed over time periods $t=1,\ldots,T$, the panel logit model with a scalar outcome and individual-specific fixed effects is
\begin{equation}
	y_{it}=\boldsymbol{1}\left\{x_{it}^{\top}\theta_0+\eta_{i0}+\varepsilon_{it}\geq 0\right\},\qquad \varepsilon_{it}\mid x_{i1},\ldots,x_{iT}\ \sim\ \text{i.i.d.~Logistic}.
	\label{eq_logit}
\end{equation}
In this model, the maximum likelihood estimator is inconsistent for $T$ fixed as $N$ tends to infinity (\citealp{chamberlain1980analysis}). Likewise, the plug-in estimator $\widehat{\theta}$ based on first-step estimates $\widehat{\eta}_i$ is inconsistent too. 

A second special case is a logistic model of network formation. Consider a process of conditionally-independent undirected link formation where each pair of agents $i<j$ decide whether to form a link according to the binary-choice model
\begin{equation}
	y_{ij}=\boldsymbol{1}\left\{x_{ij}^{\top}\theta_0+\eta_{i0}+\eta_{j0}+\varepsilon_{ij}\geq 0\right\},\qquad \varepsilon_{ij}\mid x\ \sim\ \text{i.i.d.~Logistic}.
	\label{eq_netlogit}
\end{equation}
Homophily is captured by the pair-level covariates gathered in $x=(x_{ij})_{i<j}$, and degree heterogeneity by the agent effects gathered in $\eta_0=(\eta_{10},\ldots,\eta_{N0})^{\top}$, where there are $N$ agents and $n=N(N-1)/2$ pairs. The maximum likelihood estimator $\widehat{\theta}$ is consistent as $N$ tends to infinity but the asymptotic distribution of $\sqrt{n}(\widehat{\theta}-\theta_0)$ is not centered at zero (\citealp{graham2017econometric}, \citealp{Jochmans2018}). 

More generally, model (\ref{eq_plrlogit}) also nests logistic regression models with a high-dimensional matrix $x_2$ of controls. In such settings, maximum likelihood is unreliable, or fails to exist altogether, when $d_\eta$ is not small relative to $n$, as shown by \citet{sur2019modern}. Models of binary choice on bipartite networks also share this structure \citep{bonhomme2024functional}. 

\end{example}

\begin{example}[Poisson with fixed effects]\label{example_2}
For a panel of counts, a common specification is the Poisson model (see, e.g., \citealt{hausman1984econometric,blundell2002individual}). For units $i=1,\ldots,N$ and time periods $t=1,\ldots,T$ we specify
\begin{equation}
	y_{it}\mid x_{i1},\ldots, x_{iT}\ \sim\ \text{independent Poisson}\left(\mu_{it}\right),
	\label{eq_poisson}
\end{equation}
where $\mu_{it} = \exp(x_{it}^{\top}\theta_0+\eta_{i0})$ is the conditional mean. This model has the special property that the Neyman-orthogonal score is in fact fully orthogonal, as we will see in Section \ref{sec_neyman}.
\end{example}

\begin{example}[Nonlinear regression]\label{example_3}
Our third example is a nonlinear model for a continuous outcome $y\in\mathbb{R}^n$,
\begin{equation}
	y=m(x,\beta_0,\eta_0) +\sigma_0 \varepsilon,\qquad \varepsilon\mid x\ \sim\ \text{i.i.d. standard Gaussian},
	\label{eq_nonlin_AKM}
\end{equation}
for a known function $m$, a matrix of covariates $x$, $\eta_0$ a high-dimensional vector of parameters, and $\theta_0=(\beta_0^{\top},\sigma_0^2)^{\top}$ a low-dimensional parameter. Model (\ref{eq_nonlin_AKM}) contains as special cases the model with two-way (e.g., worker and firm) heterogeneity of \citet{AbowdKramarzMargolis1999} and its nonlinear generalization introduced in \citet{crippa2025identification}, under the assumption that errors are Gaussian. The model also nests models of team production (as in \citealp{AhmadpoodJones2019}, for example). Unlike the first two examples, the nonlinear regression model is not part of the standard exponential family in general, and it does not admit a low-dimensional sufficient statistic for $\eta_0$.

\end{example}

\section{Neyman orthogonality\label{sec_neyman}}
\setcounter{equation}{0}

\subsection{Motivation\label{sec_rationale}}

Recall that a function $u^*_1$ is Neyman orthogonal if two conditions hold: it has mean zero at the truth, as in \eqref{eq_mom_star}, and its derivative in $\eta$ has mean zero at the truth, as in \eqref{eq_neyman1}. To motivate this second condition, suppose for the presentation that $d_{\eta}=d_u=1$, so both the nuisance parameter and moment condition are scalar. Expand $u_1^*$ around the nuisance parameter, take expectations, and exploit \eqref{eq_mom_star} to arrive at 
\begin{align}
	\mathbb{E}\left[u^*_1(y,x,\theta_0,\widehat\eta)\right]
    = 
    \mathbb{E}\left[\nabla_{\eta}u^*_1(y,x,\theta_0,\eta_0)\,(\widehat\eta-\eta_0)\right]+(\text{higher-order terms}),
	\label{eq_expansion}
\end{align}
where here expectations are taken at $\theta_0$, $\eta_0$, and the population distribution of $x$.

We will assume that the estimator $\widehat{\eta}$ is independent of the estimation sample. Under random sampling, a way to enforce independence is to estimate $\widehat{\eta}$ on one part of the sample, and to estimate the target parameter $\widehat{\theta}$, given the first-step estimate $\widehat{\eta}$, based on the left-out observations. This \emph{sample-splitting} strategy is popular in the literature, and variability can be reduced by using multiple splits and averaging the resulting estimates, through cross-fitting \citep{ChernozhukovChetverikovDemirerDufloHansenNeweyRobins2018}.  

The dominant term on the right-hand side of (\ref{eq_expansion}) can then be written as
$$
\underset{\text{=0 by independence (sample-splitting)}}{\underbrace{\mathrm{cov}
\left[\nabla_{\eta}u^*_1(y,x,\theta_0,\eta_0) , (\widehat\eta-\eta_0)\right]}}
+
\underset{\text{=0 under \eqref{eq_neyman1}}}{\underbrace{\mathbb{E}\left[\nabla_{\eta}u^*_1(y,x,\theta_0,\eta_0)\right]}}
\,
\mathbb{E}\left[(\widehat\eta-\eta_0)\right],
$$
where the first term is zero by independence and the second term is also zero since $u_1^*$ is Neyman orthogonal, i.e., \eqref{eq_neyman1} holds.

As such, Neyman orthogonality removes the leading term from the bias expansion  (\ref{eq_expansion}). The remaining bias is of the order of the mean-squared error $\mathbb{E}(\lVert \widehat{\eta}-\eta_0 \rVert^2)$ in general. It will be asymptotically negligible provided that it goes to zero faster than $n^{-\nicefrac{1}{2}}$. This requires the first-step estimator $\widehat{\eta}$ to converge at a rate faster than $n^{-\nicefrac{1}{4}}$. Hence, with Neyman orthogonality and sample splitting, bias is asymptotically negligible even though nuisance parameters are estimated at a slower-than-parametric rate.

\subsection{\citeauthor{Neyman1959}'s \citeyearpar{Neyman1959} orthogonality  construction\label{sec_neyman_constr}}

In our likelihood setting, the requirement of Neyman orthogonality can be expressed in terms of the correlation between $u_1^*$ and the score for the nuisance parameter. To see this, we first adapt the Neyman orthogonality property to the current setting, by requiring that it holds conditionally on $x$ and at all values of the parameters. That is, we call $u_1^*$ Neyman orthogonal if 
\begin{equation}
\mathbb{E}_{x;\theta,\eta}\left[\nabla_{\eta}u^*_1(y,x,\theta,\eta)\right]=0 \qquad \text{for all }\theta\in\Theta,\ \eta\in{\cal{H}},\ \text{almost every }x.\label{eq_neyman1_lik}
\end{equation}
We will also require $u_1^*$ to be unbiased conditionally on $x$ and at all values of the parameters; that is, 
\begin{equation}
\mathbb{E}_{x;\theta,\eta}\left[u^*_1(y,x,\theta,\eta)\right]=0 \qquad \text{for all }\theta\in\Theta,\ \eta\in{\cal{H}},\ \text{almost every }x.\label{eq_mom_star_lik}
\end{equation}
The construction below will guarantee (\ref{eq_mom_star_lik}) and (\ref{eq_neyman1_lik}), which are slightly stronger than (\ref{eq_mom_star}) and (\ref{eq_neyman1}), respectively.

To proceed, note that (\ref{eq_mom_star_lik}) is equivalent to
\begin{equation}
	\int u^*_1(y,x,\theta,\eta)\,f(y\mid x;\theta,\eta)\,dy=0 \qquad \text{for all }\theta\in\Theta,\ \eta\in{\cal{H}},\ \text{almost every }x.
	\label{eq_identity}
\end{equation}
Differentiation with respect to $\eta$ produces
\begin{equation*}
	\mathbb{E}_{x;\theta,\eta}\left[\nabla_{\eta}u^*_1(y,x,\theta,\eta)\right]+\mathbb{E}_{x;\theta,\eta}\left[\nabla_{\eta}\ln f(y\mid x;\theta,\eta)\,u^*_1(y,x,\theta,\eta)^{\top}\right]=0.
\end{equation*}
Hence, letting
$$
w_1(y,x,\theta,\eta) = \nabla_\eta \ln f(y\,|\, x;\theta,\eta),
$$
it follows that the Neyman orthogonality property (\ref{eq_neyman1_lik}) is equivalent, under (\ref{eq_mom_star_lik}), to
\begin{equation}
	\mathbb{E}_{x;\theta,\eta}\left[u^*_1(y,x,\theta,\eta)\,w_1(y, x,\theta,\eta)^{\top}\right]=0\qquad \text{for all }\theta\in\Theta,\ \eta\in{\cal{H}},\ \text{almost every }x.
	\label{eq_neyman_score}
\end{equation}

The characterization in (\ref{eq_neyman_score}) readily points to the practical construction of Neyman-orthogonal moments, as initially proposed in \citet{Neyman1959}. Denote the orthogonal projection of a function $v$ onto a space ${\cal{S}}$ by
\begin{equation}\operatorname{Proj}_{x;\theta,\eta}\left[v\,\middle\|\, {\cal{S}}\right]=\underset{h\in {\cal{S}}}{\mbox{argmin}}\, \mathbb{E}_{x;\theta,\eta}\left[\left\|v(y)-h(y)\right\|^2\right].\label{eq_proj}
\end{equation}
The orthogonal moment $u_1^*$ is simply constructed by projecting the original $u$ onto the span of $w_1$ and keeping the residual, 
\begin{equation}
	u^*_1(\cdot,x,\theta,\eta)=u(\cdot,x,\theta,\eta)-\operatorname{Proj}_{x;\theta,\eta}\left[u(\cdot,x,\theta,\eta)\,\middle\|\, \operatorname{span}\left\{w_1(\cdot , x,\theta,\eta)\right\}\right].
	\label{eq_neyman_proj}
\end{equation}
The projection guarantees that (\ref{eq_neyman_score}) holds. Moreover, since $\mathbb{E}_{x;\theta,\eta}\left[w_1(y,x,\theta,\eta)\right]=0$, (\ref{eq_mom_star_lik}) holds as well whenever the original moment function $u$ is unbiased at all parameter values.

Finally, since $\eta$ is finite-dimensional, the projection is finite-dimensional as well and the orthogonal moment $u_1^*$ in (\ref{eq_neyman_proj}) takes the explicit form
\begin{equation}
	u^*_1(y,x,\theta,\eta)=u(y,x,\theta,\eta)-\Lambda_1(x;\theta,\eta)\, w_1(y, x ,\theta,\eta),
	\label{eq_neyman_construction}
\end{equation}
with least-squares projection coefficient
\begin{equation*}
	\Lambda_1(x;\theta,\eta)=\mathbb{E}_{x;\theta,\eta}\left[u(y,x,\theta,\eta)\, w_1(y,x,\theta,\eta)^{\top}\right] \, \mathbb{E}_{x;\theta,\eta}\left[w_1(y,x,\theta,\eta)\, w_1(y,x,\theta,\eta)^{\top}\right]^{-1},
\end{equation*}
provided the Gram matrix $\mathbb{E}_{x;\theta,\eta}\left[w_1(y,x,\theta,\eta)\, w_1(y,x,\theta,\eta)^{\top}\right]$ is non-singular.

\subsection{Examples\label{sec_neyman_ex}}

\paragraph{Example \ref{example_2} (continued).}

Take a single unit from the panel Poisson model (\ref{eq_poisson}) and let $\mu_t= e^{x_t^{\top}\theta+\eta} = e^{x_t^{\top}\theta} \, e^{\eta}$. Here, taking the original moment function $u$ to be the score with respect to $\theta$, we have
\begin{equation*}
	w_1 = \nabla_{\eta}\ln f=\sum_{t=1}^{T}(y_t-\mu_t),\qquad u=\nabla_{\theta}\ln f=\sum_{t=1}^{T}x_t\,(y_t-\mu_t).
\end{equation*}
By the mean-variance equality of the Poisson distribution, the least-squares projection coefficient is 
$$
\Lambda_1=\frac{\sum_t \mu_t x_t}{\sum_t \mu_t} =
\sum_t p_t x_t = \overline{x}, \qquad p_t = 
\frac{e^{x_t^{\top}\theta}}{\sum_{\tau} e^{x_{\tau}^{\top}\theta}} ,
$$ 
which is a weighted mean of the regressors. Hence, 
\begin{equation}
	u^*_1=\sum_{t=1}^{T}\left(x_t-\overline{x}\right)(y_t-\mu_t),
	\label{eq_poisson_orth}
\end{equation}
in which the weighted covariate mean $\overline{x}$ is free of $\eta$. 

In fact, in this model $u_1^*$ itself is free of $\eta$. To see this, note that
\begin{align}
	u^*_1&=\sum_{t=1}^{T}\left(x_t-\overline{x}\right)(y_t-\mu_t)\notag\\
	&=\sum_{t=1}^{T}\left(x_t-\overline{x}\right)y_t-e^{\eta}\sum_{t=1}^{T}\left(x_t-\overline{x}\right)e^{x_t^{\top}\theta}\notag\\
	&=\sum_{t=1}^{T}\left(x_t-\overline{x}\right)y_t-e^{\eta}\left(\sum_{\tau=1}^{T}e^{x_\tau^{\top}\theta}\right)\sum_{t=1}^{T}p_t\left(x_t-\overline{x}\right)\notag\\
	&=\sum_{t=1}^{T}\left(x_t-\overline{x}\right)y_t,\label{eq_poisson_free}
\end{align}
where the last equality uses that $\sum_t p_t(x_t-\overline{x})=0$. Consequently, for every $\eta'$,
\begin{equation}
	\mathbb{E}_{x;\theta,\eta}\left[u^*_1(y,x,\theta,\eta')\right]
	=\sum_{t=1}^{T}\left(x_t-\overline{x}\right)\mu_t
	=e^{\eta}\left(\sum_{\tau=1}^{T}e^{x_\tau^{\top}\theta}\right)\sum_{t=1}^{T}p_t\left(x_t-\overline{x}\right)=0,
	\label{eq_u1star_0}
\end{equation}
and the Poisson model's Neyman-orthogonal score is fully orthogonal in the sense of \eqref{eq_full}.

This robustness property arises from multiplicative separability, and relates to the property that the Poisson fixed-effects maximum likelihood estimator is immune to the
 incidental parameter problem (\citealp{hausman1984econometric}, \citealp{blundell2002individual}, \citealp{Lancaster2002}, \citealp{DhaeneJochmans2015b}).

Figure \ref{fig_mc_poisson} illustrates this observation, based on a Monte Carlo simulation exercise with $N=1000$ and $T=5$. There is a single binary covariate, and the parameter $\theta$ is shown on the x-axis. On the y-axis we report the mean (in solid), 5th and 95th percentiles (in dashed) based on $1000$ simulations. Details on the design and estimation, including sample splitting, can be found in Appendix \ref{app_simu}. 

The left graph shows the plug-in estimator. The plug-in moment is not orthogonal, and whether the resulting estimator is biased therefore depends on the first-step estimator. In our design, the bias arises because units with no positive split-sample count are dropped, which makes $e^{\widehat\eta_i}$ an upward-biased estimate of $e^{\eta_i}$ among retained units. While the bias is moderate, it is somewhat larger at lower values of $\theta_0$. In contrast, the Neyman-orthogonal estimator shown in the right graph does not depend on $\widehat\eta$ at all, and is unaffected. 

\begin{figure}[h!]
	\centering
    \begin{tabular}{cc}
    Plug-in & Neyman\\
	\includegraphics[width=.5\textwidth]{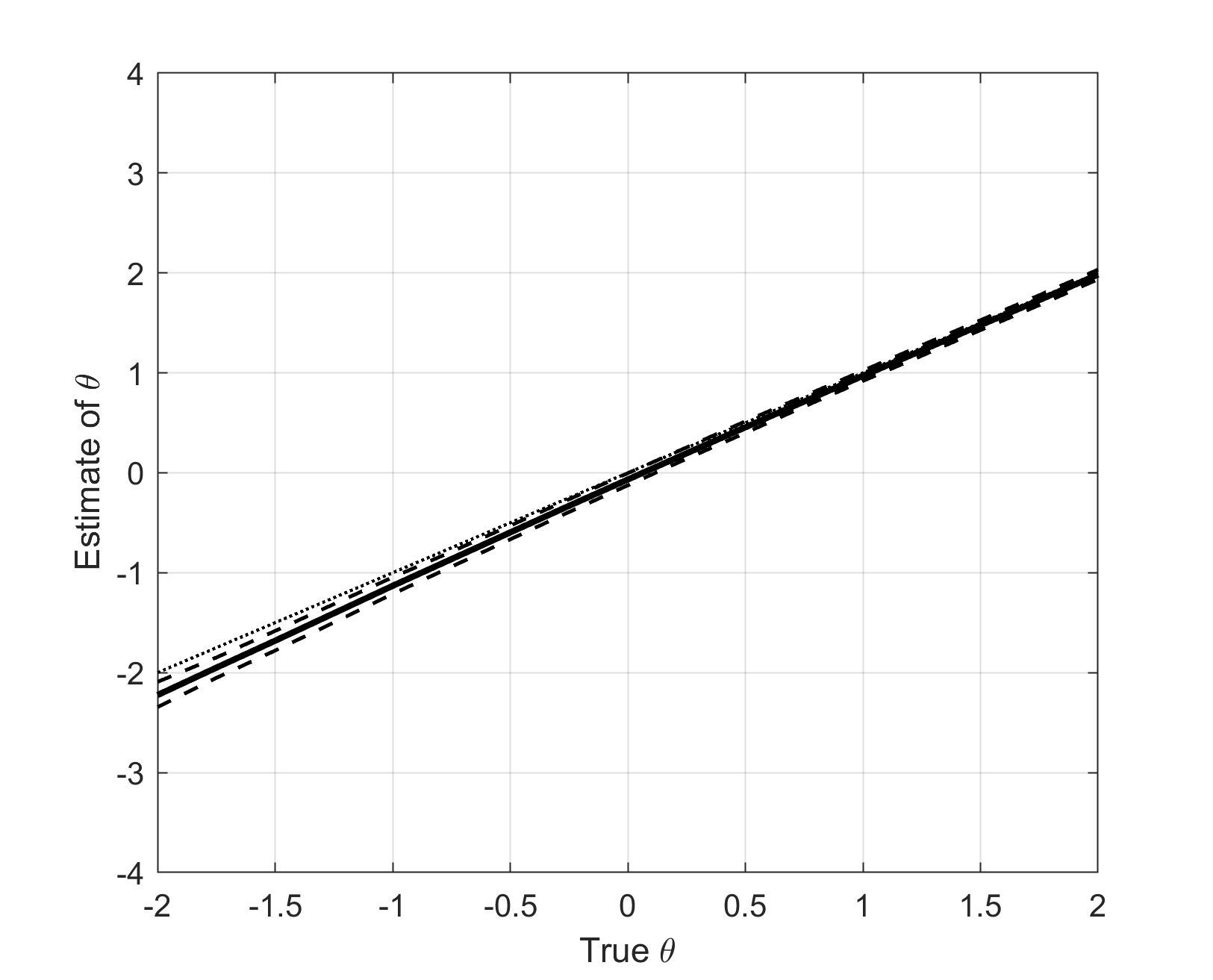}&\includegraphics[width=.5\textwidth]{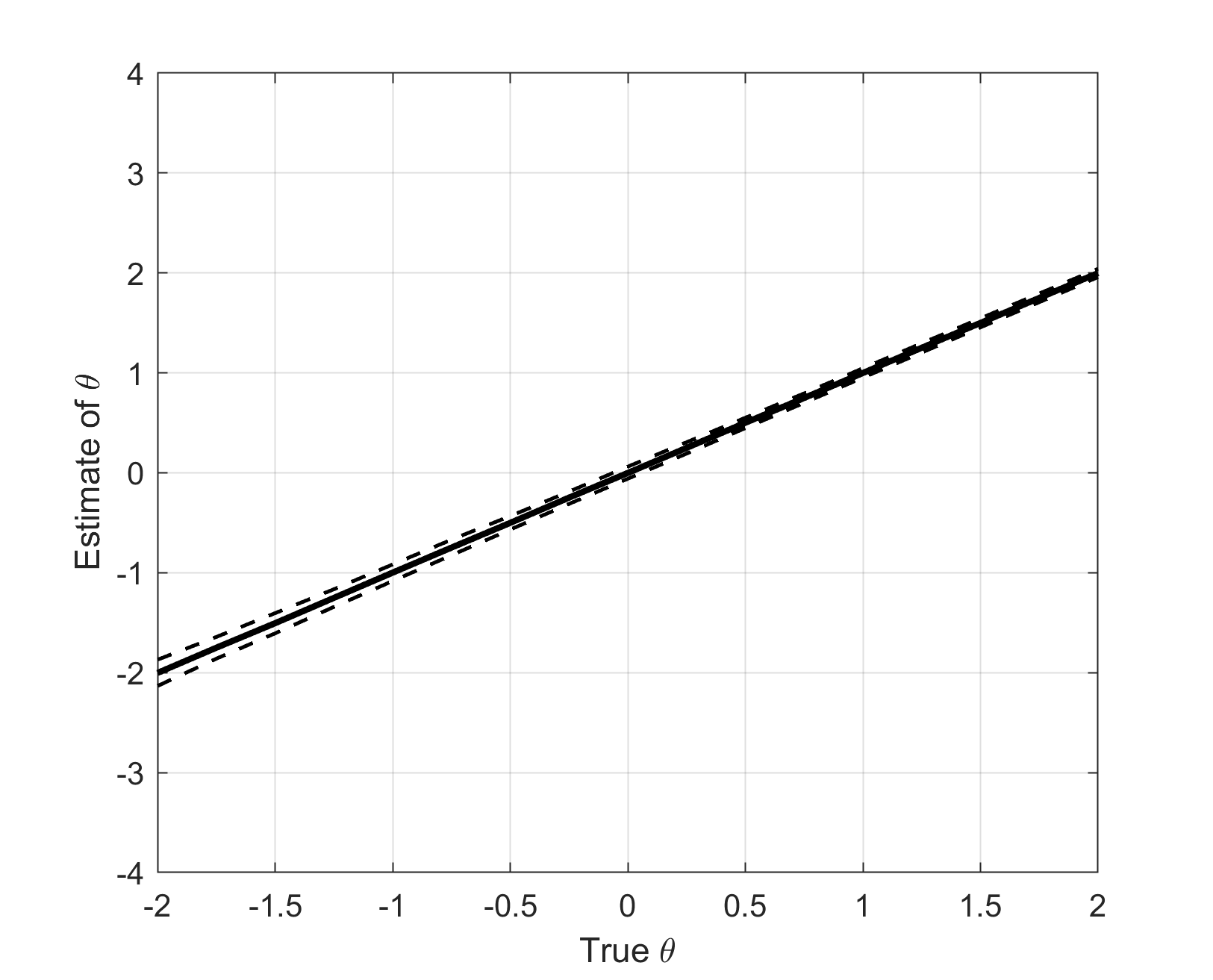}
    \end{tabular}\\
	\caption{\small Plug-in and Neyman-orthogonal estimators in a Poisson model \label{fig_mc_poisson}}
	\floatfoot{\small Notes: Monte Carlo mean (solid), 5th and 95th percentiles (dashed), and the 45-degree line (dotted), as functions of $\theta_0$. $N=1000$, $T=5$, $T_{\rm split}=2$ periods for sample splitting (without cross-fitting), and $1000$ replications. Units with no positive split-sample count are dropped. The Neyman-orthogonal estimator does not require first-step nuisance estimates, but here we still drop units with no positive split-sample count.}
\end{figure}

\paragraph{Example \ref{example_1} (continued).}

In the panel logit model (\ref{eq_logit}), the scores for the slope coefficients and fixed effect have the same structure as in the Poisson model,
\begin{equation*}
	w_1 = \nabla_{\eta}\ln f=\sum_{t=1}^{T}(y_t-\pi_t),\qquad u=\nabla_{\theta}\ln f=\sum_{t=1}^{T}x_t\,(y_t-\pi_t),
\end{equation*}
where $\pi_t=\exp(x_t^{\top}\theta+\eta)/(1+\exp(x_t^{\top}\theta+\eta))$. Let $v_t=\pi_t(1-\pi_t)$ be the conditional variance of $y_t$. Then projecting $u$ orthogonally to $w_1$ delivers
\begin{equation}
	u^*_1=\sum_{t=1}^{T}\left(x_t-\overline{x}\right)(y_t-\pi_t),\qquad \overline{x}=\frac{\sum_{t=1}^T v_t\,x_t}{\sum_{t=1}^T v_t}.
	\label{eq_logit_orth}
\end{equation}
The key difference with the previous example is that, now, the weights $v_t$ in $\overline{x}$ still depend on $\eta$, while the weights $p_t$ in the Poisson model did not.

For fixed $\theta$, $\eta$, and $x$, let
\begin{equation}
	G_{x;\theta,\eta}(\eta')=\mathbb{E}_{x;\theta,\eta}\left[u^*_1(y,x,\theta,\eta')\right]=\sum_{t=1}^{T}\left(x_t-\overline{x}(\eta')\right)\left(\pi_t(\eta)-\pi_t(\eta')\right),
	\label{eq_logit_G}
\end{equation}
where we have made the dependence of $\pi_t$ on $\eta$ explicit.
We have $G_{x;\theta,\eta}(\eta)=0$. Furthermore, by Neyman orthogonality, it equally holds that
$\partial_{\eta'}G_{x;\theta,\eta}(\eta')|_{\eta'=\eta}=0$. However, we have (see Appendix \ref{App_binary_G} for a proof)
\begin{equation}
	\partial_{\eta'}^{2}G_{x;\theta,\eta}(\eta')\big|_{\eta'=\eta}
	=\sum_{t=1}^{T}v_t\left(1-2\pi_t\right)\left(x_t-\overline{x}\right),
	\label{eq_logit_G2}
\end{equation}
where all quantities are evaluated at $\eta$. This quantity differs from zero in general. For example, for $T=2$ and a scalar covariate we have
\begin{equation}
	\partial_{\eta'}^{2}G_{x;\theta,\eta}(\eta')\big|_{\eta'=\eta}
	=\frac{2\left(x_1-x_2\right)v_1v_2\left(\pi_2-\pi_1\right)}{v_1+v_2},\label{eq_logit_G2_T2}
\end{equation}
which is non-zero unless the covariate is constant over time or $\theta=0$. The mean of $u_1^*$ is thus still subject to quadratic noise in $\eta'-\eta$. In this example the nuisance parameter estimate converges at the $T^{-\nicefrac{1}{2}}$ rate, which then translates into a bias of order $T^{-1}$. This is the same order as the bias of the standard maximum likelihood estimator \citep{Arellano2003}.

The panel logit example reveals that we need $N= o(T)$ for Neyman-orthogonal moments to alleviate the incidental-parameter bias in panel data. This rate corresponds to panels that are substantially longer than they are wide.  Furthermore, it is also the same rate required for maximum likelihood to be asymptotically unbiased (see, e.g., \citealt{HahnNewey2004}). As such, first-order orthogonalization of the likelihood score does not improve the order of the bias over estimation based on the score itself. 

It is worth noting that the projection underlying (\ref{eq_logit_orth}) can be described in terms of the sufficient statistic. Indeed,
$$
w_1=\sum_{t=1}^{T}\left(y_t-\pi_t\right)=s-\mathbb{E}_{x;\theta,\eta}\left[s\right],\qquad s=\sum_{t=1}^{T}y_t,
$$
where $s$ is sufficient for $\eta$ at a given $\theta$. The nuisance score and the centered sufficient statistic therefore span the same space, and $u_1^*$ is the residual from projecting $u$ orthogonally to $s$. The same holds in the general logit model (\ref{eq_plrlogit}), where $s=x_2^{\top}y$ and
$$
w_1=\nabla_{\eta}\ln f=x_2^{\top}\left(y-\pi\right)=s-\mathbb{E}_{x;\theta,\eta}\left[s\right].
$$
The same property also holds in the Poisson model of Example \ref{example_2}, where $w_1=\sum_{t=1}^{T}(y_t-\mu_t)=s-\mathbb{E}_{x;\theta,\eta}[s]$ for the same sufficient statistic $s=\sum_{t=1}^{T}y_t$, with the difference that, as (\ref{eq_u1star_0}) shows, in that model Neyman's projection yields full orthogonality.

Figure \ref{fig_mc_logit} illustrates this in a logit model. Details on the design and estimation, including sample splitting, can be again found in Appendix \ref{app_simu}. The left graph shows that the plug-in estimator is biased, and the bias increases with $|\theta_0|$. Moreover, the right graph shows that the Neyman-orthogonal estimator remains biased, although the bias is in the other direction (in this case, towards zero) compared to plug-in. This confirms that the Neyman-orthogonal estimator is not fully orthogonal in the logit model, and shows that the bias can be large in practice.

\begin{figure}[h!]
	\centering
    \begin{tabular}{cc}
    Plug-in & Neyman\\
	\includegraphics[width=.5\textwidth]{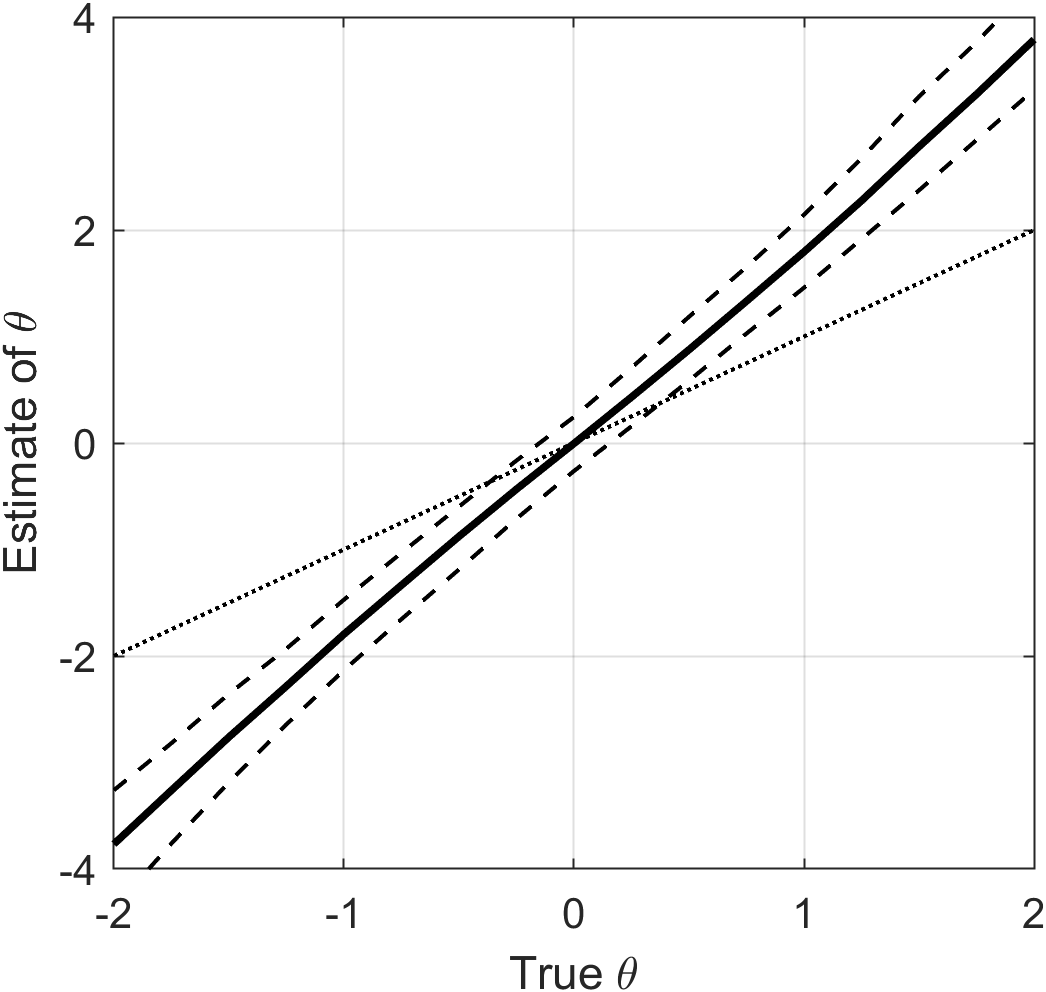}&\includegraphics[width=.5\textwidth]{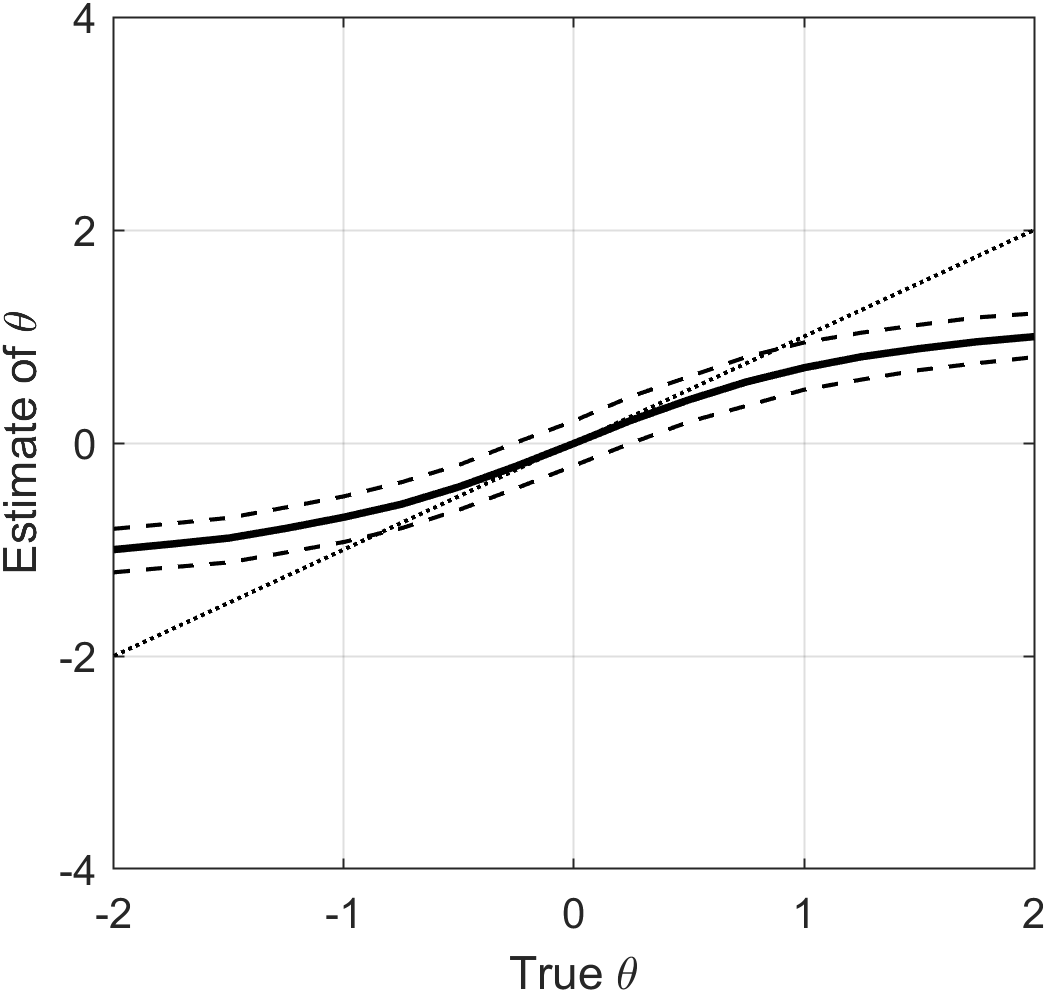}
    \end{tabular}\\
	\caption{\small Plug-in and Neyman-orthogonal estimators in a logit model \label{fig_mc_logit}}
	\floatfoot{\small Notes: see notes to Figure \ref{fig_mc_poisson}. Units whose split-sample outcomes are all zero or all one are dropped.}
\end{figure}

\section{Fully-orthogonal moments\label{sec_full}}
\setcounter{equation}{0}

\subsection{Definition and motivation\label{sec_full_def}}

In the likelihood setting, we require full orthogonality to hold conditionally on $x$ and at all values of the parameters. We thus define a moment function $u^*_\infty$ to be fully orthogonal if 
\begin{equation}\label{eq_full_def}
\mathbb{E}_{x;\theta,\eta}[u^*_\infty(y,x,\theta,\eta')]=0\qquad \text{for all }\theta\in\Theta,\eta\in{\cal{H}},\eta'\in{\cal{H}},\ \text{almost every }x.
\end{equation}
When (\ref{eq_full_def}) holds, the mean of $u^*_\infty$ is completely insensitive to the value of the nuisance parameter at which it is evaluated. Note that (\ref{eq_full_def}) implies $\mathbb{E}_{x;\theta,\eta}[u^*_\infty(y,x,\theta,\eta)]=0$, so a separate unbiasedness requirement is not needed.  

Because (\ref{eq_full_def}) holds at every $\eta'$, one may fix $\eta'$ at any convenient value. We then write $u_{\infty}^*(y,x,\theta)$ for the resulting function, so that full orthogonality reads
\begin{equation}
	\mathbb{E}_{x;\theta,\eta}\left[u_{\infty}^*(y,x,\theta)\right]=0 \qquad\text{for all }\theta\in\Theta,\ \eta\in{\cal{H}},\ \text{almost every }x.
	\label{eq_full_def2}
\end{equation}
In some models a fully-orthogonal moment can be chosen so as not to depend on the value at which $\eta'$ was fixed, as with the panel data moments in differences discussed in the introduction. In general, the projection construction of the next subsection requires setting $\eta'$ to a particular value.

The attractiveness of a fully-orthogonal estimating equation is immediate. Unlike in the Neyman-orthogonal case, GMM estimation based on a fully-orthogonal moment does not require a first-step estimator of $\eta_0$ converging at a specific rate: by (\ref{eq_full_def}) the moment condition holds at every value of the nuisance parameter. Standard GMM theory can then be applied to establish the asymptotic properties of the resulting estimator. In panel data, this leads to estimators that are consistent for fixed $T$ as $N$ tends to infinity, in contrast with estimators based on Neyman orthogonality alone.

\subsection{Orthogonality construction\label{sec_fd}}

Condition \eqref{eq_full_def2} is
\begin{equation}
	\int u^*_\infty(y,x,\theta)\,f(y \,|\, x;\theta,\eta)\,dy=0 \qquad \text{for all }\theta\in\Theta,\eta\in\mathcal{H},\ \text{almost every }x.
	\label{eq_full_lik}
\end{equation}
We will now show that (\ref{eq_full_lik}) is equivalent to $u_{\infty}^*$ being uncorrelated with the closed span of likelihood ratios of the model. This characterization, which is reminiscent of (\ref{eq_neyman_score}) in the Neyman case, will suggest a construction of fully-orthogonal moments based on a projection argument.

To proceed, fix some covariate value $x$ and parameter values $\theta\in\Theta$ and $\eta\in\mathcal{H}$. We denote the likelihood ratios of the model as
\begin{equation*}
	r_{\eta'}(y,x,\theta,\eta)=\frac{f(y\,|\, x;\theta,\eta')}{f(y\,|\, x;\theta,\eta)},\qquad \eta' \in\mathcal{H}.
\end{equation*}
Throughout this section we assume that, for every $x$ and $\theta\in\Theta$, the distributions $f(\cdot\,|\,  x;\theta,\eta)$, $\eta\in\mathcal{H}$, are mutually absolutely continuous, so that the likelihood ratios $r_{\eta'}$ are well defined and the support of $y$ does not vary with $\eta$. For the projection arguments below we further assume that $r_{\eta'}(\cdot,x,\theta,\eta)\in L^2\left(f(\cdot\,|\, x;\theta,\eta)\right)$ for all $\eta,\eta'\in\mathcal{H}$.

With this definition, we have
\begin{align*}
(\ref{eq_full_lik})& \Leftrightarrow \int u^*_\infty(y,x,\theta)\,f(y \,|\, x;\theta,\eta')\,dy=0 \qquad \text{for all }\eta'\in\mathcal{H}\\
& \Leftrightarrow \int u^*_\infty(y,x,\theta)\,\frac{f(y \,|\, x;\theta,\eta')}{f(y \,|\, x;\theta,\eta)}f(y \,|\, x;\theta,\eta)\,dy=0 \qquad \text{for all }\eta'\in\mathcal{H}\\
& \Leftrightarrow \int u^*_\infty(y,x,\theta)\,r_{\eta'}(y,x,\theta,\eta)f(y \,|\, x;\theta,\eta)\,dy=0 \qquad \text{for all }\eta'\in\mathcal{H}.
\end{align*}
This shows that $u^*_\infty$ is fully orthogonal in the sense of (\ref{eq_full_def2}) if and only if
\begin{equation}
	\mathbb{E}_{x;\theta,\eta}\left[u^*_{\infty}(y,x,\theta)\,r_{\eta'}(y,x,\theta,\eta)\right]=0\qquad \text{for all }\theta\in\Theta,\ \eta\in{\cal{H}},\ \eta'\in{\cal{H}},\ \text{almost every }x.
	\label{eq_full_score}
\end{equation}

Condition (\ref{eq_full_score}) is the full-orthogonality counterpart to the Neyman characterization (\ref{eq_neyman_score}). It shows that $u^*_{\infty}$ is fully orthogonal if and only if it is uncorrelated with all likelihood ratios $r_{\eta'}$ for $\eta'\in{\cal{H}}$. Hence, collecting the likelihood ratios into the closed linear span (for given $x$, $\theta$, $\eta$),
\begin{equation}
	{\cal S}_{x;\theta,\eta}=\overline{\operatorname{span}}\left\{\,y\mapsto r_{\eta'}(y,x,\theta,\eta)\ :\ \eta' \in\mathcal{H}\,\right\}\ \subseteq\ L^2\big(f(\cdot\,|\, x;\theta,\eta)\big),
	\label{eq_fd_space}
\end{equation}
it follows from (\ref{eq_full_score}) and continuity of the inner product that $u^*_{\infty}$ is fully orthogonal if and only if it is orthogonal in $L^2$ to the linear space ${\cal S}_{x;\theta,\eta}$.

This characterization suggests the following construction, based on a projection argument. Given any moment function $u$, the projection residual, which now carries the reference value $\eta$ as an argument,
		\begin{equation}
			u^{*}_\infty(\cdot,x,\theta,\eta)=u(\cdot,x,\theta,\eta)-\operatorname{Proj}_{x;\theta,\eta}\left(u(\cdot,x,\theta,\eta)\,\middle\|\,{\cal S}_{x;\theta,\eta}\right),
			\label{eq_fd_proj}
		\end{equation}
        where $\operatorname{Proj}_{x;\theta,\eta}\left[v\,\middle\|\,{\cal S}\right]$ is given by (\ref{eq_proj}), 
		is fully orthogonal and, among all fully-orthogonal functions, minimizes the $L^2(f(\cdot\mid x;\theta,\eta))$ distance to $u$. 

        Note that (\ref{eq_fd_proj}) depends on the value $\eta$, through the inner product used in the projection and, when $u$ depends on it, through the moment function itself. This dependence is of a different nature from the one in the Neyman construction. There, $\eta$ is the unknown nuisance parameter and must be replaced by an estimate. Here, $\eta$ is merely a reference value that we are free to choose: by (\ref{eq_full_lik}), full orthogonality of a given function involves no reference value, so every choice of $\eta$ produces a valid element of the same set. The choice does affect which element is selected, hence the precision of the resulting estimator, but is immaterial for the full orthogonality property.

Unlike the Neyman case where the projection is finite-dimensional (since we focus on a finite-dimensional $\eta$), the projection in (\ref{eq_fd_proj}) is infinite-dimensional in general. The moment function $u^{*}_\infty$ can be expressed as the solution to an operator equation. We provide details in Appendix~\ref{App_variational}.

\subsection{Examples\label{sec_full_ex}}

\paragraph{Example \ref{example_2} (continued).} 
 In the panel Poisson model (\ref{eq_poisson}), a fully-orthogonal moment based on $u=\nabla_{\theta}\ln f$ is $u_{\infty}^*=u-\mathbb{E}_{x;\theta,\eta}[u\,|\, s]$, where $s=\sum_{t=1}^Ty_t$. A key property of this model is that the conditional expectation $\mathbb{E}_{x;\theta,\eta}[u\,|\, s]$ is linear. Indeed, conditionally on $s$ the vector $y$ is multinomial with probabilities $p_t$, so that $\mathbb{E}_{x;\theta,\eta}[u\,|\,s]=\overline{x}\,(s-\sum_{t}\mu_t)$, which lies in the span of $w_1$. Hence, $u_{\infty}^*$ coincides with the Neyman-orthogonal score $u_1^*$ in (\ref{eq_poisson_orth}).

\paragraph{Example \ref{example_1} (continued).}
Consider the general logistic regression model (\ref{eq_plrlogit}), for a vector of outcomes $y\in\{0,1\}^n$. The likelihood is
\begin{equation*}
	f(y\mid x;\theta,\eta)=\prod_{i=1}^{n}\pi_i^{y_i}(1-\pi_i)^{1-y_i},\qquad \pi_i=\frac{\exp(x_{1i}^{\top}\theta+x_{2i}^{\top}\eta)}{1+\exp(x_{1i}^{\top}\theta+x_{2i}^{\top}\eta)},
\end{equation*}
the rows of the matrices $x_1$ and $x_2$ being $x_{1i}^{\top}$ and $x_{2i}^{\top}$ for $i=1,\ldots,n$. In this exponential family in $\eta$ at fixed $\theta$, with sufficient statistic
$s=x_2^{\top}y\ \in\mathbb{R}^{d_\eta}$, the likelihood ratios are given by
\begin{equation}
	r_{\eta'}(y,x,\theta,\eta)=\frac{f(y\mid x;\theta,\eta')}{f(y\mid x;\theta,\eta)}=c(x,\theta,\eta',\eta)\,\exp\left(s^{\top}(\eta'-\eta)\right),
	\label{eq_logit_ratio}
\end{equation}
where
$$
c(x,\theta,\eta',\eta)=\prod_{i=1}^{n}\frac{1+\exp(x_{1i}^{\top}\theta+x_{2i}^{\top}\eta)}{1+\exp(x_{1i}^{\top}\theta+x_{2i}^{\top}\eta')}
$$
does not depend on $y$. The likelihood ratios are thus, up to a normalization, exponentials in the sufficient statistic $s$, and
\begin{equation}
	{\cal S}_{x;\theta,\eta}=\overline{\operatorname{span}}\left\{\,y\mapsto\exp\left(s^{\top}\tau\right)\ :\ \tau\in\mathbb{R}^{d_\eta}\,\right\}.
	\label{eq_logit_span}
\end{equation}
Since $s$ has finite support this span is the set of all functions of $s$, so that
\begin{equation}
	u^*_\infty(y,x,\theta,\eta)=u(y,x,\theta,\eta)-\mathbb{E}_{x;\theta,\eta}\left[u(y,x,\theta,\eta)\mid s\right].
	\label{eq_logit_full}
\end{equation}

Hence, the logit model provides an example where fully-orthogonal moments (in fact, all of them) can be obtained by de-meaning. However, this does not mean that the resulting moments are automatically informative about $\theta$. Indeed, nontrivial moments are available if and only if $y\mapsto x_2^{\top}y$ is a non-injective mapping; that is, if two distinct outcome configurations share the same value of the sufficient statistic. Suppose that $x_2$ has continuously-distributed entries. Then, the $2^n$ values $\{x_2^{\top}y\,:\, y\in\{0,1\}^n\}$ are almost surely  distinct, so the map $y\mapsto x_2^{\top}y$ is injective, and (\ref{eq_logit_full}) always returns $u_{\infty}^*=0$ irrespective of the original moment $u$. It follows that a nontrivial fully-orthogonal moment does not exist.

Consider now the panel logit model (\ref{eq_logit}). In this case the nuisance parameter is a set of fixed effects, and so the $x_2$ matrix in the previous example is a matrix of unit indicators. Here, $T\geq 2$ already suffices to obtain nontrivial fully-orthogonal moments provided $x_{it}$ varies within units. Indeed, $\binom{T}{c}$ distinct outcome sequences all yield $\sum_t y_t=c$. The conditional logit score is one such moment.  

Consider next the $\beta$-model (\ref{eq_netlogit}) of network formation. In this model, the degree sequence $d_i=\sum_{j\neq i}y_{ij}$, $i=1,...,N$, serves as sufficient statistic for $\eta$ at given $\theta$. The non-injectivity of the degree map can be shown by the following invariance argument, which underlies the estimators of \citet{graham2017econometric} and \citet{Jochmans2018}: for any four distinct agents $\{i,j,k,l\}$, the configurations
\begin{equation*}
	\{y_{ij}=y_{kl}=1,\ y_{ik}=y_{jl}=0\}\qquad\text{and}\qquad \{y_{ij}=y_{kl}=0,\ y_{ik}=y_{jl}=1\}
\end{equation*}
leave the degree sequence unchanged. This delivers informative fully-orthogonal moments based on network tetrads.

\subsection{A mixture interpretation\label{sec_functional}}

The projection in (\ref{eq_fd_proj}) that constructs fully-orthogonal moments can be understood as implementing the construction in \citet{Neyman1959}, applied to a different nuisance parameter and with a different reference measure. In turn, this approach is closely related to functional differencing, which was initially proposed for panel data models in \citet{bonhomme2012functional}.

To describe this interpretation, let us treat $\eta$ as a random vector, drawn from an unrestricted conditional distribution given covariates, $\rho(\eta\,|\, x)$. This perspective is natural in panel data, where it is sometimes referred to as ``correlated random effects'', but we do not need to restrict ourselves to the panel data setting. The model then becomes a semiparametric mixture, with likelihood
\begin{equation}
	f^{\rm mixt}(y\mid x;\theta,\rho)=\int_{\cal{H}} f(y\mid x;\theta,\eta)\,\rho(\eta\mid x)\,d\eta.
	\label{eq_mixture}
\end{equation}
We leave $\rho$ unrestricted, aside from the requirement that ${\cal H}$ be its support. Note that $f^{\rm mixt}$ depends linearly on $\rho$. 

In this mixture model, $u^*$ being fully orthogonal as in (\ref{eq_full_lik}) can be interpreted as $u^*(y,x,\theta)$ having zero mean conditional on $x$ \emph{and} $\eta$, almost surely in $x,\eta$. Hence, $u^*$ being fully orthogonal is equivalent to it belonging to the null space of the conditional expectation operator
$$E_{x;\theta}:u(y)\mapsto E_{x;\theta}[u](\eta)=\int u(y)f(y\,|\, x;\theta,\eta)dy.$$
Finding such functions is the goal of functional differencing. Elements in the null-space of $E_{x;\theta}$ can be found analytically in certain models, as shown by \citet{honore2025moment} and \citet{dano2023transition} in a variety of discrete choice panel models and in \citet{bonhomme2024functional} in some network models. However, as we have seen in Example \ref{example_1}, fully-orthogonal moments may be uninformative about the parameter of interest. \citet{chamberlain2010binary} highlights that most binary choice models only possess trivial fully-orthogonal moments outside the logit case.

We now show that the construction in \citet{Neyman1959}, in the mixture model (\ref{eq_mixture}), provides a general way to construct such functions, which coincides with the functional differencing projection in \citet{bonhomme2012functional}. To see this, note that in the mixture model the nuisance parameter is $\rho$, and the score with respect to it is the collection
\begin{equation}
	\nabla_{\delta}\ln f^{\rm mixt}(y\mid x;\theta,\rho_{\delta})\Big|_{\delta=0}=\frac{\int f(y\mid x;\theta,\eta)\,\nabla_{\delta}\ln\rho_\delta(\eta\mid x)|_{\delta=0}\,\rho(\eta\mid x)\,d\eta}{\int f(y\mid x;\theta,\eta)\,\rho(\eta\mid x)\,d\eta},\label{eq_scores_mixt}
\end{equation}
for all sufficiently regular one-dimensional submodels $\{\rho_\delta\}$ passing through $\rho$ (see Chapter 25 in \citealp{van2000asymptotic}). Neyman's orthogonal moment, see (\ref{eq_neyman_proj}), is then the projection of $u$ onto the orthogonal complement to the span of these scores, the \emph{orthocomplement of the tangent set} of the semiparametric model.

Now, since the quantities $\nabla_{\delta}\ln\rho_\delta(\eta\mid x)|_{\delta=0}$ in (\ref{eq_scores_mixt}) are essentially unrestricted, the zero-mean functions in the orthocomplement of the tangent set at given $x$ and $\theta$ are
$$\left\{u^*\in L^2\left(f^{\rm mixt}(\cdot\mid x;\theta,\rho)\right)\,:\, \int u^*(y)\,f(y \,|\, x;\theta,\eta)\,dy=0 \qquad \text{for all }\eta\in\mathcal{H}\right\},$$
that is, they are all the fully-orthogonal moments that are square integrable with respect to $f^{\rm mixt}(\cdot\mid x;\theta,\rho)$.\footnote{To see this, note that $u^*$ is orthogonal to the score (\ref{eq_scores_mixt}) if and only if, by Fubini's theorem,
$$
\int_{{\cal H}}\nabla_{\delta}\ln\rho_\delta(\eta\mid x)\big|_{\delta=0}\left(\int u^*(y)\,f(y\mid x;\theta,\eta)\,dy\right)\rho(\eta\mid x)\,d\eta=0.
$$
As the submodel scores $\nabla_{\delta}\ln\rho_\delta(\cdot\mid x)|_{\delta=0}$ range over the mean-zero functions of $\eta$, this holds for all submodels if and only if $\int u^*(y)f(y\mid x;\theta,\eta)dy$ does not vary with $\eta$ on the support ${\cal{H}}$ of $\rho$. Requiring in addition that $u^*$ have mean zero under $f^{\rm mixt}$ sets the constant to zero.} This shows that Neyman's strategy applied to the semiparametric mixture model, and the fully-orthogonal projection in (\ref{eq_fd_proj}), coincide, except for the fact that the two projections use different reference measures -- $f^{\rm mixt}(\cdot\mid x;\theta,\rho)$ for the former and $f(\cdot\mid x;\theta,\eta)$ for the latter. Furthermore, the projection onto the orthocomplement of the tangent set of the mixture model coincides with the functional differencing projection. \citet{bonhomme2012functional} proposes a numerical strategy to approximate this projection for estimation. 

\begin{remark}[Efficiency] The tangent set interpretation reveals that, when $u=\nabla_{\theta}\ln f^{\rm mixt}$, the corresponding projection residual is the efficient score of the semiparametric mixture model. While achieving efficiency requires knowledge of the mixing distribution $\rho$, the characterization remains useful in practice. One may compute the projection at a convenient density $\widetilde\rho$ rather than at the true $\rho_0$: the resulting moment is still fully orthogonal, since full orthogonality does not depend on the reference measure, and it is efficient under $\widetilde\rho$. The choice of $\widetilde\rho$ thus plays the role of a working model, guiding the selection among fully-orthogonal moments without affecting validity.
\end{remark}

\begin{remark}[Dynamics] The mixture formulation extends to richer conditioning structures. \citet{bonhomme2023identification,bonhomme2025moment} apply it to nonlinear panel models in which strict exogeneity of covariates is relaxed to allow for dynamic feedback. There, the nuisance parameter is not only the density of the heterogeneity (given initial conditions), but also the density of covariates given past covariates, past outcomes, and the heterogeneity, also known as the feedback process.
\end{remark}

\section{Higher-order orthogonality\label{sec_higher}}
\setcounter{equation}{0}

We now consider a third notion of orthogonality that is intermediate between the previous two.

\subsection{Motivation\label{sec_higher_def}}

To motivate \eqref{eq_neymanq}, consider first a moment function $u_2^*$ orthogonal to order $q=2$. Suppose in this subsection that the nuisance parameter and the moment function are both univariate, so $d_{\eta}=d_u=1$. Expanding $u_2^*$ around $\eta_0$ (this time to second order) and taking expectations while noting that $u_2^*$ has mean zero at the truth, gives
\begin{equation*}
\begin{split}
\mathbb{E}[u_2^*(y,x,\theta_0,\widehat\eta)]
 = \hphantom{\frac{1}{2}} &
\mathbb{E}[\nabla_\eta u_2^*(y,x,\theta_0,\eta_0) \, (\widehat\eta - \eta_0)^{\hphantom{2}}]
\\
& +
\frac{1}{2} 
\mathbb{E}[\nabla_\eta^2 u_2^*(y,x,\theta_0,\eta_0) \, (\widehat\eta - \eta_0)^2]
+
(\text{higher-order terms}) .
\end{split}
\end{equation*}

When $\widehat{\eta}$ is independent of the data (for example, estimated on a separate sample, as explained in Subsection \ref{sec_rationale}), this gives
\begin{equation}
\begin{split}
\mathbb{E}[u_2^*(y,x,\theta_0,\widehat\eta)]
 = \hphantom{\frac{1}{2}} &
\underset{\text{=0 under \eqref{eq_neymanq} at $j=1$}}{\underbrace{\mathbb{E}[\nabla_\eta u_2^*(y,x,\theta_0,\eta_0)]}} \, \mathbb{E}[(\widehat\eta - \eta_0)^{\hphantom{2}}]
\\
& +
\frac{1}{2} 
\underset{\text{=0 under \eqref{eq_neymanq} at $j=2$}}{\underbrace{\mathbb{E}[\nabla_\eta^2 u_2^*(y,x,\theta_0,\eta_0) ]}}\, \mathbb{E}[(\widehat\eta - \eta_0)^2]
+
(\text{higher-order terms}) .
\end{split}\label{eq_second_order}
\end{equation}

Hence, since $u_2^*$ is orthogonal to order $q=2$, the two dominant terms in (\ref{eq_second_order}) are equal to zero. Furthermore, the remainder term will generally be of order $\mathbb{E}[\|\widehat\eta - \eta_0\|^3]$, implying that the bias induced by estimation noise in the nuisance parameter is asymptotically negligible when $\sqrt{n} \, \mathbb{E}[\|\widehat\eta - \eta_0\|^3] \rightarrow 0$ as $n\rightarrow \infty$.

More generally, while the case $q=1$ recovers the classical Neyman-orthogonality condition, larger values of $q$ yield increased robustness to estimation noise in the nuisance parameter. By the same logic, if $u_q^*$ is orthogonal to order $q$, then $\mathbb{E}[u_q^*(y,x,\theta_0,\widehat\eta)]$ is of order $\mathbb{E}[\|\widehat\eta - \eta_0\|^{q+1}]$, implying that the bias induced by estimation noise in the nuisance parameter is asymptotically negligible when $\sqrt{n} \, \mathbb{E}[\|\widehat\eta - \eta_0\|^{q+1}] \rightarrow 0$ as $n\rightarrow \infty$. This requires that the first-step estimator $\widehat\eta$ converge at a rate faster than $n^{-\nicefrac{1}{2(q+1)}}$, thus relaxing Neyman's $n^{-\nicefrac{1}{4}}$ condition when $q>1$ \citep{MackeySyrgkanisZadik2018}.

\subsection{Construction by projection on generalized scores\label{sec_higher_constr}}

In the likelihood setting, we require orthogonality to order $q$ to hold conditionally on $x$ and at all values of the parameters, similarly to our definitions of Neyman and full orthogonality. That is, we call $u_q^*$ orthogonal to order $q$ if it is conditionally unbiased,
\begin{equation}
\mathbb{E}_{x;\theta,\eta}\left[u^*_q(y,x,\theta,\eta)\right]=0 \qquad \text{for all }\theta\in\Theta,\ \eta\in{\cal{H}},\ \text{almost every }x,\label{eq_mom_star_lik_q}
\end{equation}
and all its first $q$ derivatives in $\eta$ have zero expectation as well,
\begin{equation}
\mathbb{E}_{x;\theta,\eta}\left[\nabla_{\eta}^ju^*_q(y,x,\theta,\eta)\right]=0 \qquad \text{for all }j=1,...,q,\ \theta\in\Theta,\ \eta\in{\cal{H}},\ \text{almost every }x,\label{eq_qorth}
\end{equation}
where recall that $\nabla_{\eta}^j$ collects all derivatives at order $j$ in a vector.

Orthogonality to order $q$ can be equivalently written as an uncorrelatedness condition, similarly to the other two notions of orthogonality we have considered. To see this, recall the likelihood ratio, for given $x$ and $\theta$,
\begin{equation*}
	r_{\eta'}(y,x,\theta,\eta)=\frac{f(y\,|\, x;\theta,\eta')}{f(y\,|\, x;\theta,\eta)},\qquad \eta' \in\mathcal{H}.
\end{equation*}
Denote its vector of $j$th derivatives as
$$
w_j (y,x,\theta,\eta) = \nabla_{\eta'}^j\, r_{\eta'}(y,x,\theta,\eta)\big|_{\eta'=\eta}=\frac{\nabla_{\eta}^jf(y\,|\, x;\theta,\eta)}{f(y\,|\, x;\theta,\eta)}.
$$

Observe that, when $j=1$, $w_j$ coincides with the score for the nuisance parameter, $w_1$, from Section \ref{sec_neyman}. When $j=2$, 
$$w_2 (y,x,\theta,\eta) = \nabla^2_{\eta}\ln f(y\,|\, x;\theta,\eta)+\nabla_{\eta}\ln f(y\,|\, x;\theta,\eta)\otimes\nabla_{\eta}\ln f(y\,|\, x;\theta,\eta),$$
which has mean zero by the information identity. The set of functions $\lbrace w_j \rbrace$ is known as the \cite{Bhattacharyya1946} basis.\footnote{More generally, $w_j$ is a polynomial in $\nabla_{\eta}\ln f$,..., $\nabla_{\eta}^j\ln f$. In the scalar case, it is given by the complete Bell polynomial evaluated at these elements. The vector case follows from the multivariate Fa\`a di Bruno formula (\citealp{constantine1996multivariate}).} Its elements can be interpreted as generalized score functions, and they satisfy (by differentiating $j$ times $\int r_{\eta'}f_{\eta}dy=1$ at $\eta'=\eta$):
\begin{equation}
\mathbb{E}_{x;\theta,\eta}[w_j (y,x,\theta,\eta)] = 0 \qquad \text{for all }j\geq 1,\ \theta\in\Theta,\ \eta\in{\cal{H}},\ \text{almost every }x.\label{eq_zeromean_wj}
\end{equation}

Now consider again $q=2$ in the all-scalar case. As in our derivation of Neyman's original device, differentiate the unbiasedness condition (\ref{eq_mom_star_lik_q}) once with respect to $\eta$. This yields the moment condition
\begin{equation}
\underset{\text{=0 under \eqref{eq_qorth} at $j=1$}}{\underbrace{\mathbb{E}_{x;\theta,\eta}\left[\nabla_{\eta}u^*_2(y,x,\theta,\eta)\right]}}+\mathbb{E}_{x;\theta,\eta}\left[u^*_2(y,x,\theta,\eta)\,w_{1}(y,x,\theta,\eta)\right]=0,
	\label{eq_q2_d1}
    \end{equation}
so
\begin{equation} \label{eq:w10}
\mathbb{E}_{x;\theta,\eta}\left[u^*_2(y,x,\theta,\eta)\,w_{1}(y,x,\theta,\eta)\right]=0  \text{ for all }\theta\in \Theta,\ \eta\in {\cal{H}},\ \text{ almost every }x.
\end{equation}

Next, twice differentiate (\ref{eq_mom_star_lik_q}) to see that 
\begin{align*}
&\underset{\text{=0 under \eqref{eq_qorth} at $j=2$}}{\underbrace{\mathbb{E}_{x;\theta,\eta}\left[\nabla_{\eta}^2u^*_2(y,x,\theta,\eta)\right]}}\\
&+
2\,\mathbb{E}_{x;\theta,\eta}\left[\nabla_{\eta}u^*_2(y,x,\theta,\eta)\,w_{1}(y,x,\theta,\eta)\right]+\mathbb{E}_{x;\theta,\eta}\left[u^*_2(y,x,\theta,\eta)\,w_{2}(y,x,\theta,\eta)\right]=0,
\end{align*}
so
\begin{align}
&2\,\mathbb{E}_{x;\theta,\eta}\left[\nabla_{\eta}u^*_2(y,x,\theta,\eta)\,w_{1}(y,x,\theta,\eta)\right]+\mathbb{E}_{x;\theta,\eta}\left[u^*_2(y,x,\theta,\eta)\,w_{2}(y,x,\theta,\eta)\right]=0.\label{eq_inter_q}
\end{align}

Furthermore, because \eqref{eq:w10} can be written as $\int u_2^*\nabla_{\eta}f\, dy=0$ and needs to hold at all parameter values, we obtain, by differentiating it, 
\begin{equation}
	\mathbb{E}_{x;\theta,\eta}\left[\nabla_{\eta}u^*_2(y,x,\theta,\eta)\,w_{1}(y,x,\theta,\eta)\right]+\mathbb{E}_{x;\theta,\eta}\left[u^*_2(y,x,\theta,\eta)\,w_{2}(y,x,\theta,\eta)\right]=0.
	\label{eq_q2_cross}
\end{equation}
Combining (\ref{eq_inter_q}) and (\ref{eq_q2_cross}) then implies
\begin{equation} \label{eq:w20}
\mathbb{E}_{x;\theta,\eta}\left[u^*_2(y,x,\theta,\eta)\,w_{2}(y,x,\theta,\eta)\right]=0 .
\end{equation}

This argument shows that, if $u_2^*$ is orthogonal to order $q=2$, then it is uncorrelated with the first two Bhattacharyya functions $w_1$ and $w_2$. Moreover, by following the same argument in reverse it is easy to see that the converse is also true. Hence, second-order orthogonality, for a moment satisfying (\ref{eq_mom_star_lik_q}), can be equivalently expressed as (\ref{eq:w10}) and (\ref{eq:w20}).

Iterating the argument, now for the general vector-valued case, reveals that $u_q^*$ is orthogonal to order $q$ if and only if it satisfies the unbiasedness condition (\ref{eq_mom_star_lik_q}) and the orthogonality conditions
\begin{equation}
	\mathbb{E}_{x;\theta,\eta} [u^*_q(y,x,\theta,\eta)\, w_j(y,x,\theta,\eta)^{\top} ] =0,\qquad j=1,\ldots,q . 
	\label{eq_qorth_lik}
\end{equation}

The representation in (\ref{eq_qorth_lik}) suggests constructing higher-order orthogonal functions by projection, as we did before for the other two notions of orthogonality. Specifically, $u_q^*$ is constructed as the residual of a projection of the original moment function $u$ onto the span of $w_1,\ldots, w_q$:
\begin{equation}
			u^{*}_q(\cdot,x,\theta,\eta)=u(\cdot,x,\theta,\eta)-\operatorname{Proj}_{x;\theta,\eta}\left(u(\cdot,x,\theta,\eta)\,\middle\|\, \operatorname{span}\{w_1(\cdot,x,\theta,\eta),\ldots,w_q(\cdot,x,\theta,\eta)\}\right),
			\label{eq_q_proj}
		\end{equation}
        where $\operatorname{Proj}_{x;\theta,\eta}\left[v\,\middle\|\,{\cal S}\right]$ is again given by (\ref{eq_proj}).

Stacking $W_q=(w_1^{\top},\ldots,w_q^{\top})^{\top}$, we can write the explicit expression 
\begin{equation}
	u^{*}_{q}(y,x,\theta,\eta)=u(y,x,\theta,\eta)-\Lambda_q(x;\theta,\eta) W_{q}(y,x,\theta,\eta),
	\label{eq_qorth_explicit}
\end{equation}
with least-squares projection coefficient
\begin{equation*}
	\Lambda_q(x;\theta,\eta)=\mathbb{E}_{x;\theta,\eta}\left[u(y,x,\theta,\eta)\, W_q(y,x,\theta,\eta)^{\top}\right] \, \mathbb{E}_{x;\theta,\eta}\left[W_q(y,x,\theta,\eta)\, W_q(y,x,\theta,\eta)^{\top}\right]^{-1}.
\end{equation*}
Note that, by (\ref{eq_zeromean_wj}), $u_q^*$ automatically satisfies (\ref{eq_mom_star_lik_q}) provided the original moment $u$ is unbiased at all parameter values.  

The explicit expression in (\ref{eq_qorth_explicit}) requires nonsingularity of the Gram matrix of the leading basis functions $W_q$, $\mathbb{E}_{x;\theta,\eta}\left[W_q(y,x,\theta,\eta)\, W_q(y,x,\theta,\eta)^{\top}\right]$, as Subsection \ref{sec_higher_limit} below illustrates. For the special case where $u = \nabla_\theta \ln f$, (\ref{eq_qorth_explicit}) reduces to the projected score of \citet{SmallMcLeish1989} and \citet{WatermanLindsay1996}. The more general construction is introduced in \cite{bonhomme2024neyman}.

\begin{remark}[Nested spaces]
Note that, as anticipated in Figure \ref{fig_graph}, the three notions of orthogonality correspond to a sequence of nested spaces: 
$$ \underset{\text{Neyman}}{\underbrace{\operatorname{span}\{w_1(\cdot,x,\theta,\eta)\}}}\subseteq \underset{\text{order }q}{\underbrace{\operatorname{span}\{w_1(\cdot,x,\theta,\eta),\ldots,w_q(\cdot,x,\theta,\eta)\}}}\subseteq \underset{\text{full}}{\underbrace{{\cal S}_{x;\theta,\eta}}},$$
where ${\cal S}_{x;\theta,\eta}$ is the closed span of likelihood ratios defined in (\ref{eq_fd_space}). The second inclusion holds because each $w_j$ is a limit of finite differences of likelihood ratios. A larger space brings more robustness to estimation noise in $\eta$, at the cost of less information about $\theta$.
\end{remark}

\subsection{Examples\label{sec_higher_ex}}

\paragraph{Example \ref{example_1} (continued).}
Consider again the general logistic regression model (\ref{eq_plrlogit}), for a vector of outcomes $y\in\{0,1\}^n$. Writing the log-likelihood as
$$
\ln f(y\mid x;\theta,\eta)=y^{\top}x_1\theta+s^{\top}\eta-\kappa(x,\theta,\eta),$$
for
$$\kappa(x,\theta,\eta)=\sum_{i=1}^{n}\ln\left(1+\exp\left(x_{1i}^{\top}\theta+x_{2i}^{\top}\eta\right)\right),
$$
we have $\nabla_{\eta}\ln f=s-\nabla_{\eta}\kappa$, for $s=x_2^{\top}y$ the sufficient statistic, while $\nabla_{\eta}^j\ln f=-\nabla_{\eta}^j\kappa$ for $j\geq 2$ does not depend on $y$. Hence $w_j$ is a polynomial of degree $j$ in the sufficient statistic $s$, its degree-$j$ term being $w_1^{\otimes j}$.

The coefficients of this polynomial can be expressed in terms of the cumulants of $s$. Indeed, $\kappa$ is the cumulant generating function of the exponential family in $\eta$, so that $\nabla_{\eta}^{j}\kappa$ is the $j$th cumulant of $s$ under $f(\cdot\mid x;\theta,\eta)$. In particular $\nabla_{\eta}\kappa=\mathbb{E}_{x;\theta,\eta}[s]$ and $\nabla^2_{\eta}\kappa=\operatorname{vec}\left(\operatorname{Var}_{x;\theta,\eta}(s)\right)$, so that
\begin{equation*}
	w_1=s-\mathbb{E}_{x;\theta,\eta}\left[s\right],\qquad w_2=\left(s-\mathbb{E}_{x;\theta,\eta}\left[s\right]\right)^{\otimes 2}-\operatorname{vec}\left(\operatorname{Var}_{x;\theta,\eta}(s)\right).
\end{equation*}

Since the $w_j$ have mean zero and are of degree $j$, they form a basis of the mean-zero polynomials in $s$ of total degree at most $q$. Orthogonality to order $q$ therefore amounts to projecting the original moment on powers of the sufficient statistic:
\begin{equation}
	u^*_q(\cdot,x,\theta,\eta)=u(\cdot,x,\theta,\eta)-\operatorname{Proj}_{x;\theta,\eta}\left(u(\cdot,x,\theta,\eta)\,\middle\|\,\left\{\text{polynomials in }s\text{ of degree}\leq q\right\}\right),
	\label{eq_logit_q}
\end{equation}
where the constants may be included in the projection space without affecting the residual, since $u$ has mean zero.

Specializing to the panel logit model (\ref{eq_logit}) makes the construction fully explicit. At the level of a single unit, $x_2$ is a vector of ones and the sufficient statistic is the scalar $s=\sum_{t=1}^{T}y_t$, so that $w_j$ is a polynomial of degree $j$ in $s$ whose coefficients are the cumulants of a sum of $T$ independent Bernoulli variables; Appendix~\ref{App_binary} collects the explicit formulas. Orthogonality to order $q$ therefore projects the score on the powers $s,s^2,\ldots,s^q$.

Figure \ref{fig_mc_logit_2} illustrates this in Monte Carlo simulations based on the logit model, using the design mentioned in Section \ref{sec_neyman}. The left graph shows the estimator orthogonal to $q=2$, the middle one shows the estimator orthogonal to $q=3$, and the right one shows the conditional logit estimator, which is based on a fully-orthogonal moment. We see that the estimators behave very similarly in this simulation. Hence, even if there is no guarantee for the second-order orthogonal moment to be fully orthogonal in the logit model, increasing the orthogonality requirement by a single order (from $q=1$ to $q=2$) is sufficient to almost fully remove the bias of the Neyman-orthogonal estimator.

\begin{figure}[h!]
	\centering
    \begin{tabular}{ccc}
 $q=2$ & $q=3$ & Conditional logit\\
	\includegraphics[width=.33\textwidth]{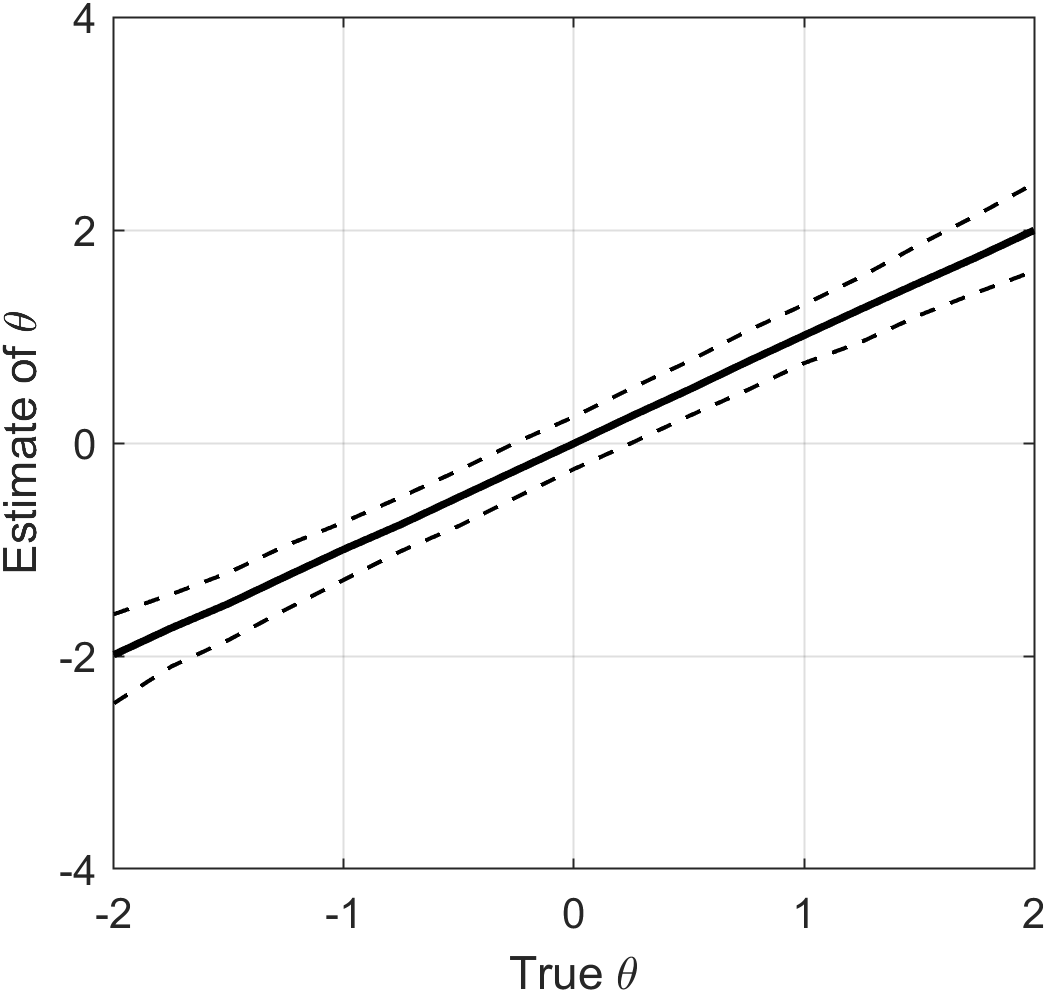}&\includegraphics[width=.33\textwidth]{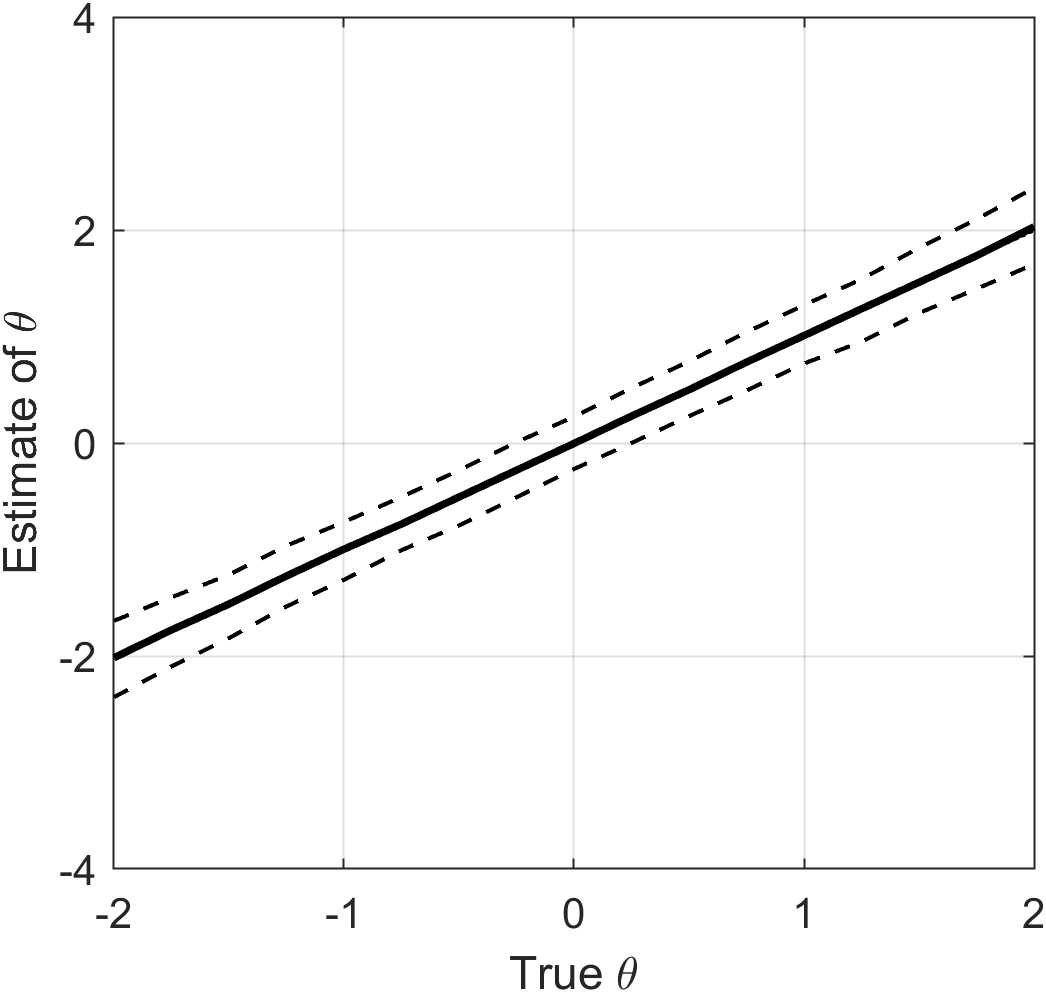}&\includegraphics[width=.33\textwidth]{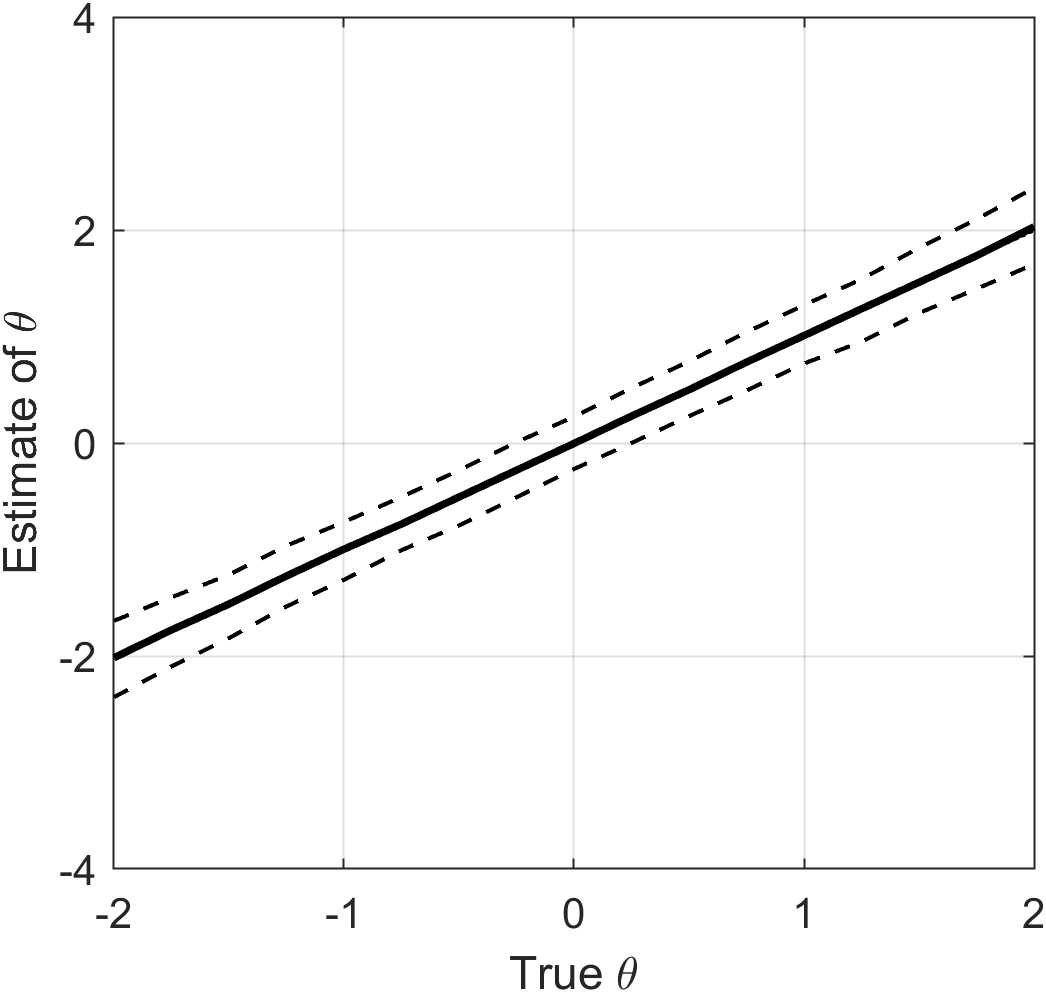}
    \end{tabular}\\
	\caption{\small Orthogonal estimators in a logit model \label{fig_mc_logit_2}}
	\floatfoot{\small Notes: see notes to Figure \ref{fig_mc_logit}.}
\end{figure}

 \paragraph{Example \ref{example_3} (continued).}
Consider now the nonlinear regression model (\ref{eq_nonlin_AKM}). Its log-likelihood is
$$
\ln f(y\mid x;\theta,\eta)=\frac{1}{\sigma^2}\,y^{\top}m(x,\beta,\eta)-\frac{1}{2\sigma^{2}}\left(y^{\top}y+m(x,\beta,\eta)^{\top}m(x,\beta,\eta)\right)-\frac{n}{2}\ln\left(2\pi\sigma^{2}\right),
$$
where recall that $\theta=(\beta^{\top},\sigma^2)^{\top}$. Viewed as a family in $\eta$ at fixed $\theta$, this is a Gaussian location family whose natural parameter is $m(x,\beta,\eta)/\sigma^2$ and whose sufficient statistic is $y$ itself. Since $\eta$ is $d_\eta$-dimensional while $y$ is $n$-dimensional, the natural parameter belongs to a $d_\eta$-dimensional submanifold of $\mathbb{R}^n$, and unless $m$ is linear in $\eta$ the model is a \emph{curved} exponential family in the sense of \citet{efron1975defining}.\footnote{In our context with covariates and nuisance parameters, a member of the curved exponential family takes the form
\begin{equation*}
	f(y\mid x;\theta,\eta)=a_{\theta}(y,x)\,b_{\theta}(x,\eta)\exp\left(s(y,x)^{\top}\tau_{\theta}(x,\eta)\right),
\end{equation*}
where $s(y,x)$ is the sufficient statistic and $\tau_{\theta}(x,\eta)$ the natural parameter.} This is the difference with Examples \ref{example_1} and \ref{example_2}: there the sufficient statistic $s$ reduced the data to $d_\eta$ dimensions, whereas here no reduction takes place.

Writing $z=(y-m)/\sigma$ for the vector of standardized residuals, the score for the nuisance parameter is
\begin{equation*}
	w_1=\nabla_{\eta}\ln f=\frac{1}{\sigma}\,\nabla_{\eta}m(x,\beta,\eta)z ,
\end{equation*}
and the higher-order elements of the Bhattacharyya basis follow from the chain rule. Indeed, the components of $z$ being independent standard normal, the derivatives of the Gaussian location family with respect to $m$ deliver products of Hermite polynomials: for a multi-index $j=(j_1,\ldots,j_n)$,
\begin{equation*}
	\frac{\nabla_{m}^{j}f(y\mid x;\theta,\eta)}{f(y\mid x;\theta,\eta)}=\frac{1}{\sigma^{|j|}}\prod_{i=1}^{n}H_{j_i}(z_i),\qquad |j|=\sum_{i=1}^{n}j_i,
\end{equation*}
where $H_k$ denotes the (probabilitists') Hermite polynomial of degree $k$ and $\nabla_{m}^{j}$ the corresponding mixed partial derivative with respect to $m$. Composing with $\eta\mapsto m(x,\beta,\eta)$ by the multivariate Fa\`a di Bruno formula \citep{constantine1996multivariate} then expresses $w_j$ as a combination of such products of total degree at most $j$, with coefficients built from the derivatives of $m$ up to order $j$. When $n=1$ and $\eta$ is scalar, for instance,
\begin{equation*}
	w_2=\frac{\nabla^2_{\eta}m}{\sigma}\,H_1(z)+\frac{\left(\nabla_{\eta}m\right)^2}{\sigma^{2}}\,H_2(z).
\end{equation*}

The projection (\ref{eq_q_proj}) is available in closed form despite the absence of a sufficient statistic. The reason is the orthogonality of the Hermite polynomials under the Gaussian reference measure,
$$
\mathbb{E}_{x;\theta,\eta}\left[\prod_{i=1}^{n}H_{j_i}(z_i)\,\prod_{i=1}^{n}H_{k_i}(z_i)\right]=\boldsymbol{1}\{j=k\}\prod_{i=1}^{n}j_i! ,
$$
which delivers the Gram matrix of $W_q$ and the cross moments with $u$ as explicit functions of the derivatives of $m$. \citet{bonhomme2024neyman} implement this approach to estimate the model of team production of \citet{AhmadpoodJones2019} using a sample of researchers in economics.

\subsection{The large-$q$ limit: connection to full orthogonality\label{sec_higher_limit}}

Full orthogonality implies orthogonality to every order $q\geq 1$ (by differentiating $\mathbb{E}_{x;\theta,\eta}[u^*(y,x,\theta,\eta')]=0$ in $\eta'$ under the integral sign). We now show that the converse is also true, provided that an analyticity condition holds.\footnote{\citet{small2011hilbert} similarly invoke analyticity to show that the Bhattacharyya basis spans the set of conditional-mean zero functions of the sufficient statistic in exponential family models.} In this sense, order-$q$ orthogonality interpolates continuously between Neyman and full orthogonality, with the limit recovering the latter
under analyticity. 

To develop the argument, suppose ${\mathcal{H}}$ is open and connected, $u^{*}(y,x,\theta,\eta)$ is orthogonal to every order $q\geq 1$, and the function
$$G_{x;\theta,\eta}:\eta'\mapsto\mathbb{E}_{x;\theta,\eta}\!\left[u^{*}(y,x,\theta,\eta')\right]$$ is real-analytic on ${\mathcal{H}}$, for all $\theta$, $\eta$, almost every $x$. Then, the entire Taylor expansion of $G_{x;\theta,\eta}$ around $\eta$ has vanishing
coefficients, so $G_{x;\theta,\eta}$ is identically zero on $\mathcal{H}$. This implies that, for all $\theta$, $\eta$, and almost every $x$,
\begin{equation*}
G_{x;\theta,\eta}(\eta')=	\int u^{*}(y,x,\theta,\eta')\,f(y\mid x;\theta,\eta)\,dy=0
\qquad \text{for all } \eta'\in\mathcal{H}.
\end{equation*}
It follows that $u^*$ is fully orthogonal.

\paragraph{Example \ref{example_1} (continued).}
In the panel logit model, because $s$ takes only the $T+1$ values $\{0,1,\ldots,T\}$, polynomials of degree at most $T$ already exhaust all functions of $s$. Since $w_j$ has degree $j$ in $s$, the functions $w_1,\ldots,w_T$ are linearly independent while $w_{T+1}$ is a combination of them. The construction thus terminates at $q=T$: the Gram matrix of $W_q$ is singular for $q>T$, and at $q=T$ the projection space contains all mean-zero functions of $s$, so that, by (\ref{eq_logit_full}),
$$
u^*_T=u-\mathbb{E}_{x;\theta,\eta}\left[u\mid s\right]=u^*_\infty.
$$
When $u=\nabla_{\theta}\ln f$, the factorization $\ln f(y\mid x;\theta,\eta)=\ln f(y\mid s,x;\theta)+\ln f(s\mid x;\theta,\eta)$ gives $\mathbb{E}_{x;\theta,\eta}[u\mid s]=\nabla_{\theta}\ln f(s\mid x;\theta,\eta)$, so that $u^*_T$ is the conditional-logit score. The sequence $u_1^*,\ldots,u_T^*$ thus interpolates between the variance-weighted demeaning of (\ref{eq_logit_orth}) and the fully-orthogonal moment $u^*_\infty$. 

This property is consistent with the Monte Carlo exercise reported in Figure \ref{fig_mc_logit_2}. In that setting, we used $T_{\rm est}=3$ periods for estimation after having used two periods for sample splitting. In the simulations, 
$u_3^*$ and the conditional-logit score $u_{\infty}^*$ coincide up to the tolerance of the root-finding algorithm.

The same argument applies to the general form (\ref{eq_plrlogit}). There $s=x_2^{\top}y$ is multivariate and the basis consists of the centered monomials in $s$ of degree $j$. Since $s$ has finite support, polynomials of sufficiently high degree exhaust the functions on that support (a bound on the required degree is the number of support points of $s$ minus 1). However, when the map $y\mapsto x_2^{\top}y$ is injective, the limiting moment is $u^*_{\infty}=0$.

We illustrate this last point with simulations based on a panel logit model with two scalar covariates and a heterogeneous coefficient, 
\begin{equation}
	y_{it}=\boldsymbol{1}\left\{\theta_0 x_{it}+\eta_{i0}z_{it}+\varepsilon_{it}\geq 0\right\},\,\, \varepsilon_{it}\mid x_{i1},\ldots,x_{iT},z_{i1},\ldots,z_{iT} \sim\ \text{i.i.d.~Logistic},
	\label{eq_logit_hetero}
\end{equation}
where $x_{it}$ is binary and $z_{it}$ is continuously distributed.

Figure \ref{fig_mc_logit_2_z} shows two orthogonal estimators, to order $q=2$ (left) and $q=3$ (right). The $q=2$ estimator is well-behaved and shows little bias. In contrast, the $q=3$ estimator exhibits bias and much wider confidence bands. In fact, the estimator fails to exist in approximately a quarter of the simulations. This contrasts sharply with Figure \ref{fig_mc_logit_2}, in the standard panel logit case, where the $q=3$ estimator was numerically identical to the fully-orthogonal conditional logit estimator. The model with a heterogeneous coefficient of a continuous covariate is a setting where the map $y_i\mapsto \sum_t z_{it}y_{it}$ is injective, and there is thus no non-trivial fully-orthogonal moment in this model. The simulations show that, in such cases, increasing the orthogonality order can have a high cost in terms of variability.

\begin{figure}[h!]
	\centering
    \begin{tabular}{cc}
 $q=2$ & $q=3$\\
	\includegraphics[width=.5\textwidth]{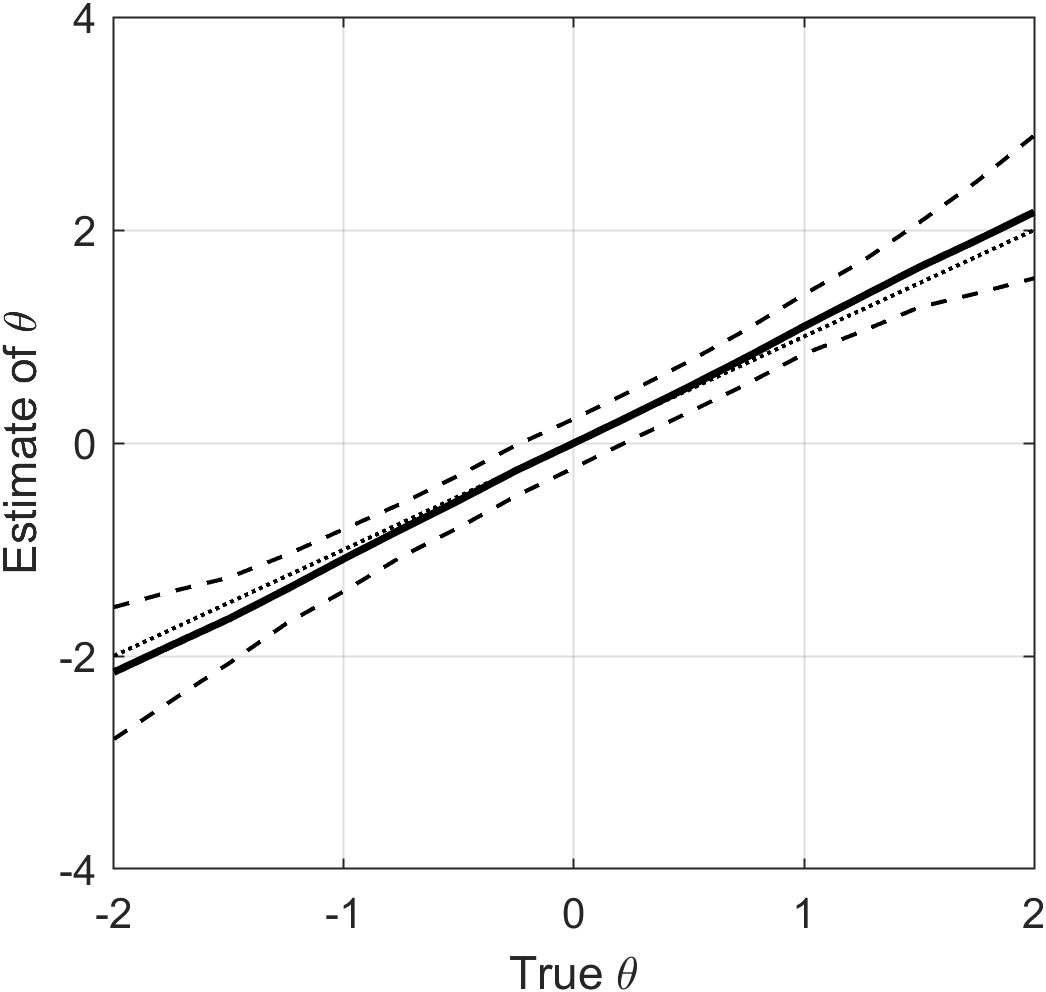}&\includegraphics[width=.5\textwidth]{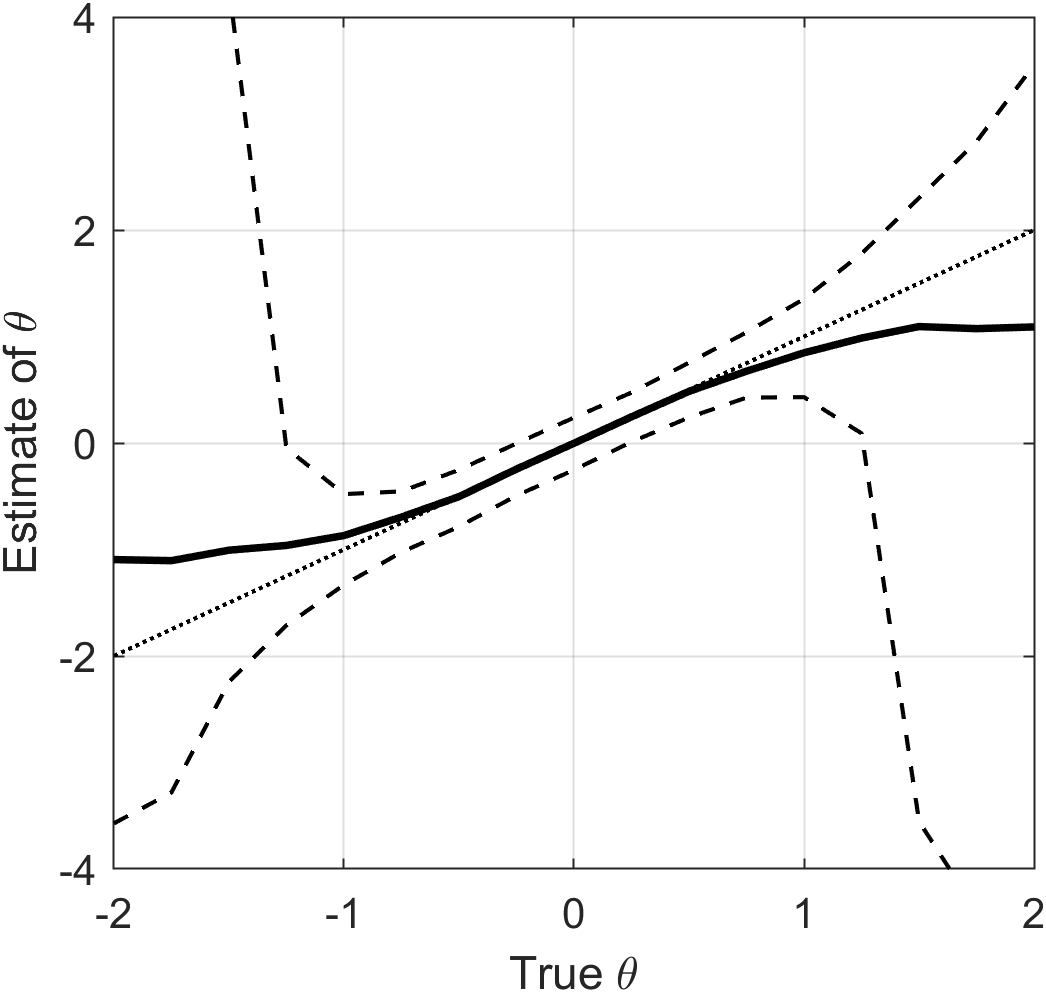}
    \end{tabular}\\
	\caption{\small Orthogonal estimators in a logit model with a heterogeneous coefficient \label{fig_mc_logit_2_z}}
	\floatfoot{\small Notes: see notes to Figure \ref{fig_mc_logit}. The results are computed based on the instances where the estimator exists (100\% of cases for $q=2$, approximately 75\% of cases for $q=3$).}
\end{figure}

\section{Other target parameters\label{sec_average}}
\setcounter{equation}{0}

In many applications the researcher is interested in a parameter $\mu_0\in{\cal{M}}$ that satisfies the moment condition
\begin{equation}
	\mathbb{E}\!\left[u(y,x,\theta_0,\eta_0,\mu_0)\right]=0,
	\label{eq_mu_target}
\end{equation}
and thus is a functional of $\theta_0$, $\eta_0$, and the distribution of $x$. Leading examples are average partial effects, average elasticities, and counterfactual policy effects in structural models. The three notions of orthogonality can be adapted to handle these quantities. 

\subsection{Orthogonality}

The key difference with the case discussed in the earlier sections is that $u$ is not unbiased at all parameter values, that is, 
$$\mathbb{E}_{x;\theta,\eta}\left[u(y,x,\theta,\eta,\mu)\right]$$
is not identically zero: only its expectation over $x$ is zero when evaluated at the true values $\theta=\theta_0$, $\eta=\eta_0$, $\mu=\mu_0$.

For all three notions, we replace the unbiasedness requirement on the orthogonal moment by 
\begin{align}
	&\mathbb{E}_{x;\theta,\eta}\left[u^*(y,x,\theta,\eta,\mu)\right]=\mathbb{E}_{x;\theta,\eta}\left[u(y,x,\theta,\eta,\mu)\right]\notag\\& \quad\quad\quad\quad\quad\text{for all }\theta\in\Theta,\ \eta\in{\cal{H}},\  \mu\in{\cal{M}},\ \text{ almost every } x,
	\label{eq_mom_star_lik_target}
\end{align}
and we will require (\ref{eq_mom_star_lik_target}) for all $u^*\in \{u_1^*,u_q^*,u_{\infty}^*\}$.

With this change, Neyman orthogonality is as before, except for the presence of $\mu$ and for the replacement of unbiasedness by (\ref{eq_mom_star_lik_target}). That is, we say that $u_1^*$ is Neyman orthogonal if
\begin{align}
	&\mathbb{E}_{x;\theta,\eta}\left[\nabla_{\eta}u^*_{1}(y,x,\theta,\eta,\mu)\right]=0\,\, \text{for all }\theta\in\Theta,\ \eta\in{\cal{H}},\  \mu\in{\cal{M}},\ \text{ almost every } x,
	\label{eq_neyman_target}
\end{align}
and $u^*=u^*_1$ satisfies (\ref{eq_mom_star_lik_target}).

Likewise, for all $q\geq 1$, we say that $u_q^*$ is orthogonal to order $q$ if 
\begin{align}
	&\mathbb{E}_{x;\theta,\eta}\left[\nabla_{\eta}^ju^*_{q}(y,x,\theta,\eta,\mu)\right]=0\notag\\&\quad\quad\quad\quad\quad\text{for all }j=1,\ldots,q,\ \theta\in\Theta,\ \eta\in{\cal{H}},\  \mu\in{\cal{M}},\ \text{ almost every } x,
	\label{eq_orderq_target}
\end{align} 
and $u^*=u^*_q$ satisfies (\ref{eq_mom_star_lik_target}).

Finally, we say that $u^*_{\infty}$ is fully orthogonal if 
\begin{align}
	&\mathbb{E}_{x;\theta,\eta}\left[u^*_{\infty}(y,x,\theta,\eta',\mu)\right] \text{ does not depend on }\eta'\in{\cal{H}}\notag\\ &\quad\quad\quad\quad\quad\text{for all }\theta\in\Theta,\ \eta\in{\cal{H}},\  \mu\in{\cal{M}},\ \text{ almost every } x,
	\label{eq_fd_target}
\end{align}
and $u^*=u^*_{\infty}$ satisfies (\ref{eq_mom_star_lik_target}).

Under full orthogonality, plugging in any arbitrary value for $\eta'$, using (\ref{eq_fd_target}) and (\ref{eq_mom_star_lik_target}), and noting that $u$ is unbiased at the truth, we obtain
\begin{align*}
\mathbb{E}\left[u^*_{\infty}(y,x,\theta_0,\eta',\mu_0)\right]
&=\mathbb{E}\left[\mathbb{E}_{x;\theta_0,\eta_0}\left[u^*_{\infty}(y,x,\theta_0,\eta',\mu_0)\right]\right]\\
&=\mathbb{E}\left[\mathbb{E}_{x;\theta_0,\eta_0}\left[u^*_{\infty}(y,x,\theta_0,\eta_0,\mu_0)\right]\right]\\
&=\mathbb{E}\left[\mathbb{E}_{x;\theta_0,\eta_0}\left[u(y,x,\theta_0,\eta_0,\mu_0)\right]\right]=0,
\end{align*}
where the outer expectations are taken with respect to the distribution of $x$. Given a consistent estimate of $\theta_0$, $\mu_0$ can then be consistently estimated using standard GMM.

\subsection{The Neyman projection}

The orthogonal constructions take a different form in the case of the target parameter $\mu_0$. Starting with Neyman orthogonality, note that differentiating (\ref{eq_mom_star_lik_target}) in $\eta$ for $u^*=u_1^*$ implies
\begin{align*}&\mathbb{E}_{x;\theta,\eta}\left[\nabla_\eta u^*_{1}(y,x,\theta,\eta,\mu)\right]+\mathbb{E}_{x;\theta,\eta}\left[w_1(y,x,\theta,\eta)\,u^*_{1}(y,x,\theta,\eta,\mu)^{\top}\right]\\&=\nabla_{\eta}\mathbb{E}_{x;\theta,\eta}\left[u(y,x,\theta,\eta,\mu)\right],\end{align*}
where $w_1$ is the score with respect to $\eta$, and recall that, with our definition, the matrix $\nabla_{\eta}\mathbb{E}_{x;\theta,\eta}\left[u(y,x,\theta,\eta,\mu)\right]$ is $d_{\eta}\times d_u$.

Hence, Neyman orthogonality is equivalent to (\ref{eq_mom_star_lik_target}) at $u^*=u_1^*$, and
\begin{align}
	&\mathbb{E}_{x;\theta,\eta}\left[w_1(y, x,\theta,\eta)\,u^*_1(y,x,\theta,\eta,\mu)^{\top}\right]=\nabla_{\eta}\mathbb{E}_{x;\theta,\eta}\left[u(y,x,\theta,\eta,\mu)\right]\notag\\& \quad\quad\quad\quad\quad\text{for all }\theta\in\Theta,\ \eta\in{\cal{H}},\ \mu\in{\cal{M}},\ \text{almost every }x.
	\label{eq_neyman_score_target}
\end{align}

Conditions (\ref{eq_mom_star_lik_target}) and (\ref{eq_neyman_score_target}) define a set of affine restrictions on $u_1^*$. Denote this set by ${\cal{T}}_{x;\theta,\eta,\mu}^{1}$. Given an original moment function $u$, we construct a Neyman-orthogonal $u_1^*$ as
\begin{equation}
	u^*_1(\cdot,x,\theta,\eta,\mu)=\operatorname{Proj}_{x;\theta,\eta}\left[u(\cdot,x,\theta,\eta,\mu)\,\middle\|\, {\cal{T}}_{x;\theta,\eta,\mu}^{1}\right],
	\label{eq_neyman_proj_target}
\end{equation}
where $\operatorname{Proj}_{x;\theta,\eta}\left[v\,\middle\|\,{\cal S}\right]$ is given by (\ref{eq_proj}).

Since $\eta$ is finite-dimensional, the projection is finite-dimensional as well, and $u_1^*$ in (\ref{eq_neyman_proj_target}) takes the explicit form
\begin{equation}
	u^*_1(y,x,\theta,\eta,\mu)=u(y,x,\theta,\eta,\mu)-\Lambda_1(x;\theta,\eta,\mu)\, w_1(y, x ,\theta,\eta),
	\label{eq_neyman_construction_target}
\end{equation}
with projection coefficient
\begin{align*}
	&\Lambda_1(x;\theta,\eta,\mu)=\left(\mathbb{E}_{x;\theta,\eta}\left[u(y,x,\theta,\eta,\mu)\, w_1(y,x,\theta,\eta)^{\top}\right]-b_1(x;\theta,\eta,\mu)^{\top}\right) \\&\quad\quad\quad\quad\quad\quad\quad\quad\quad\times \mathbb{E}_{x;\theta,\eta}\left[w_1(y,x,\theta,\eta)\, w_1(y,x,\theta,\eta)^{\top}\right]^{-1},
\end{align*}
where
$$
b_1(x;\theta,\eta,\mu)=\nabla_{\eta}\,\mathbb{E}_{x;\theta,\eta}\left[u(y,x,\theta,\eta,\mu)\right]
$$
is the sensitivity of the expected moment to the nuisance parameter. Indeed, $u_1^*$ in (\ref{eq_neyman_construction_target}) satisfies (\ref{eq_mom_star_lik_target}), since $w_1$ has mean zero, and it satisfies (\ref{eq_neyman_score_target}) by construction of $\Lambda_1$. Comparing with (\ref{eq_neyman_construction}), the only difference is the subtraction of $b_1$ from the covariance between $u$ and the nuisance score.

\subsection{The order-$q$ projection}

The same argument applies at higher orders. Proceeding as in Section \ref{sec_higher_constr}, but starting from (\ref{eq_mom_star_lik_target}) in place of unbiasedness at all parameter values, orthogonality to order $q$ is equivalent to (\ref{eq_mom_star_lik_target}) at $u^*=u^*_q$ and
\begin{align}
	&\mathbb{E}_{x;\theta,\eta}\left[w_j(y, x,\theta,\eta)\,u^*_q(y,x,\theta,\eta,\mu)^{\top}\right]=\nabla_{\eta}^j\,\mathbb{E}_{x;\theta,\eta}\left[u(y,x,\theta,\eta,\mu)\right]\notag\\& \quad\quad\quad\quad\quad\text{for all }j=1,\ldots,q,\ \theta\in\Theta,\ \eta\in{\cal{H}},\ \mu\in{\cal{M}},\ \text{almost every }x,
	\label{eq_qorth_score_target}
\end{align}
where the right-hand side is a total derivative, with $\eta$ entering both the expectation and $u$.

Conditions (\ref{eq_mom_star_lik_target}) and (\ref{eq_qorth_score_target}) again define a set of affine restrictions on $u_q^*$, which we denote by ${\cal{T}}_{x;\theta,\eta,\mu}^{q}$, and we construct
\begin{equation}
	u^*_q(\cdot,x,\theta,\eta,\mu)=\operatorname{Proj}_{x;\theta,\eta}\left[u(\cdot,x,\theta,\eta,\mu)\,\middle\|\, {\cal{T}}_{x;\theta,\eta,\mu}^{q}\right].
	\label{eq_q_proj_target}
\end{equation}

Stacking $W_q=(w_1^{\top},\ldots,w_q^{\top})^{\top}$ and $b_q=\left(b_1^{\top},\ldots,b_q^{\top}\right)^{\top}$, where
$$
b_j(x;\theta,\eta,\mu)=\nabla_{\eta}^j\,\mathbb{E}_{x;\theta,\eta}\left[u(y,x,\theta,\eta,\mu)\right],
$$
the projection (\ref{eq_q_proj_target}) takes the explicit form
\begin{equation}
	u^*_q(y,x,\theta,\eta,\mu)=u(y,x,\theta,\eta,\mu)-\Lambda_q(x;\theta,\eta,\mu)\, W_q(y, x ,\theta,\eta),
	\label{eq_qorth_explicit_target}
\end{equation}
with projection coefficient
\begin{align*}
	&\Lambda_q(x;\theta,\eta,\mu)=\left(\mathbb{E}_{x;\theta,\eta}\left[u(y,x,\theta,\eta,\mu)\, W_q(y,x,\theta,\eta)^{\top}\right]-b_q(x;\theta,\eta,\mu)^{\top}\right) \\&\quad\quad\quad\quad\quad\quad\quad\quad\quad\times \mathbb{E}_{x;\theta,\eta}\left[W_q(y,x,\theta,\eta)\, W_q(y,x,\theta,\eta)^{\top}\right]^{-1}.
\end{align*}
Similarly to the case $q=1$, (\ref{eq_mom_star_lik_target}) holds at $u^*=u^*_q$ because the $w_j$ have mean zero, and (\ref{eq_qorth_score_target}) holds by construction of $\Lambda_q$. Comparing with (\ref{eq_qorth_explicit}), the construction differs only through the subtraction of $b_q$. Expression (\ref{eq_qorth_explicit_target}) is Theorem 1 in \citet{bonhomme2024neyman}.

\subsection{The fully-orthogonal projection}

Since (\ref{eq_fd_target}) holds at every value of $\eta'$, we may fix it at any convenient value and simply write $u^*_{\infty}(y,x,\theta,\mu)$, as in Section \ref{sec_full_def}. The condition then reads, for all $\theta$, $\eta$, and $x$,
$$
\mathbb{E}_{x;\theta,\eta}\left[u^{*}_{\infty}(y,x,\theta,\mu)\right]=b_0(x;\theta,\eta,\mu)^{\top}\qquad\text{for all }\eta\in{\cal{H}} ,
$$
where $b_0(x;\theta,\eta,\mu)=\mathbb{E}_{x;\theta,\eta}[u(y,x,\theta,\eta,\mu)^{\top}]$. Fixing a reference value $\eta\in{\cal{H}}$ and applying the change of measure of Section \ref{sec_fd}, we obtain that (\ref{eq_fd_target}) is equivalent to
\begin{align}
	&\mathbb{E}_{x;\theta,\eta}\left[u^*_{\infty}(y,x,\theta,\mu)\,r_{\widetilde\eta}(y,x,\theta,\eta)\right]=b_0(x;\theta,\widetilde\eta,\mu)^{\top}\notag\\&\quad\quad\quad\quad\quad\text{for all }\theta\in\Theta,\ \eta\in{\cal{H}},\ \widetilde\eta\in{\cal{H}},\ \mu\in{\cal{M}},\ \text{almost every }x.
	\label{eq_full_score_target}
\end{align}

Condition (\ref{eq_full_score_target}) defines an affine subset of $L^2\big(f(\cdot\mid x;\theta,\eta)\big)$, which we denote by ${\cal{T}}^{\infty}_{x;\theta,\eta,\mu}$. As before, we construct
\begin{equation}
	u^*_\infty(\cdot,x,\theta,\eta,\mu)=\operatorname{Proj}_{x;\theta,\eta}\left[u(\cdot,x,\theta,\eta,\mu)\,\middle\|\, {\cal{T}}^{\infty}_{x;\theta,\eta,\mu}\right] ,
	\label{eq_fd_proj_target}
\end{equation}
the closest fully-orthogonal moment to $u$ in $L^2\big(f(\cdot\mid x;\theta,\eta)\big)$. As in Section \ref{sec_fd}, the feasible set consists of functions that do not involve $\eta$, whereas the projected moment does. When $u$ is unbiased at all parameter values we have $b_0= 0$, and (\ref{eq_fd_proj_target}) coincides with the residual construction (\ref{eq_fd_proj}) of Section \ref{sec_fd}.

Two differences with the other projections are worth noting. First, the fully-orthogonal projection is again infinite-dimensional; Appendix~\ref{App_variational} gives the corresponding operator equation. Second, ${\cal{T}}^{\infty}_{x;\theta,\eta,\mu}$ may be empty, in which case no fully-orthogonal moment exists for the target $\mu_0$, as we illustrate in Example \ref{example_1} next.

\subsection{Examples}

\paragraph{Example \ref{example_1} (continued).}
Consider the panel logit model (\ref{eq_logit}) with a scalar continuous covariate, suppress the unit index as before, and consider the average partial effect of $x_t$
\begin{equation}
	\mu_0=\mathbb{E}\left[\frac{1}{T}\sum_{t=1}^{T}\nabla_{x}\Pr\left(y_{t}=1\mid x_{t}=x\right)\right]=\mathbb{E}\left[\frac{1}{T}\sum_{t=1}^{T}\theta_0\,\pi_{t}(1-\pi_{t})\right].
	\label{eq_ape}
\end{equation}
A unit's contribution to the moment condition is
$$
u(y,x,\theta,\eta,\mu)=\frac{\theta}{T}\sum_{t=1}^{T}v_t(\theta,\eta)-\mu ,\qquad v_t=\pi_t(1-\pi_t).
$$
This moment does not depend on the outcome, so that $\mathbb{E}_{x;\theta,\eta}[u\,w_j^{\top}]=0$ for every $j$ and the construction only depends on the sensitivities $b_j$. Those have a compact form. Recall from Section \ref{sec_higher_ex} that $\nabla^j_{\eta}\kappa$ is the $j$th cumulant $\kappa_j$ of the sufficient statistic $s=\sum_t y_t$, and note that $\sum_t v_t=\kappa_2$. Hence
$$
b_0=\frac{\theta}{T}\,\kappa_2-\mu,\qquad b_j=\frac{\theta}{T}\,\kappa_{j+2},\qquad j\geq1 .
$$

The orthogonal moments follow. With $\mathbb{E}_{x;\theta,\eta}[w_1^2]=\kappa_2$, the Neyman construction (\ref{eq_neyman_construction_target}) gives
\begin{equation*}
	u^*_1=u+\frac{\theta}{T}\,\frac{\kappa_3}{\kappa_2}\,w_1 ,\qquad w_1=s-\kappa_1 ,
\end{equation*}
and, using $w_2=(s-\kappa_1)^2-\kappa_2$ together with $\mathbb{E}[w_1w_2]=\kappa_3$ and $\mathbb{E}[w_2^2]=\kappa_4+2\kappa_2^2$, the order-$2$ construction (\ref{eq_qorth_explicit_target}) gives
\begin{equation*}
	u^*_2=u+\frac{\theta}{T}\left(\kappa_3,\ \kappa_4\right)
	\begin{pmatrix}\kappa_2 & \kappa_3\\ \kappa_3 & \kappa_4+2\kappa_2^{2}\end{pmatrix}^{-1}
	\begin{pmatrix}w_1\\ w_2\end{pmatrix} .
\end{equation*}

Fully-orthogonal moments, in contrast, do not exist here. Fix $x$ and $\theta$, and write a candidate as $u^*_\infty=g(s,x,\theta,\mu)+h$, with $\mathbb{E}_{x;\theta,\eta}[h\mid s]=0$. By sufficiency, the conditional distribution of $y$ given $s$ does not depend on $\eta$, so $h$ contributes nothing at any value of the fixed effect and
$$
\mathbb{E}_{x;\theta,\eta}\left[u^*_\infty(y,x,\theta,\mu)\right]=\sum_{k=0}^{T}g(k,x,\theta,\mu)\,\Pr\left(s=k\mid x;\theta,\eta\right).
$$
Full orthogonality thus requires the $T+1$ numbers $g(0,x,\theta,\mu)$, ..., $g(T,x,\theta,\mu)$ to satisfy
$$\sum_{k=0}^{T} g(k,x,\theta,\mu)\,\Pr(s=k\mid x;\theta,\eta)=b_0(x;\theta,\eta,\mu),\qquad \text{for all }\eta\in\mathcal{H},$$
which is a finite-dimensional system facing a continuum of restrictions. For $T=2$, clearing denominators and matching powers of $e^\eta$ leaves the restriction
$$
\frac{\theta}{2}\left(e^{x_1\theta}-e^{x_2\theta}\right)^2e^{2\eta}=0,
$$
which holds only when $x_1=x_2$ or $\theta=0$. 

\paragraph{Example \ref{example_3} (continued).}
As a special case of model (\ref{eq_nonlin_AKM}), consider the linear model
\begin{equation}
	y=x\eta_0 +\sigma_0 \varepsilon,\qquad \varepsilon\mid x\ \sim\ {\cal N}(0,I_n).
	\label{eq_AKM}
\end{equation}
A leading application is the two-way model of \citet{AbowdKramarzMargolis1999}, in which $y$ collects log wages, $x$ encodes worker and firm identities, and $\eta_0$ the corresponding effects. Besides $\sigma^2_0$, interest centers on variance components: quadratic forms $\mu_0=\eta_0^{\top}Q\eta_0$ for a symmetric $Q$, measuring the dispersion of worker and firm effects and their covariance (see \citealp{bonhomme2026users} for a survey). Here we take $\sigma^2$ as known, and start with the original moment function $u(y,x,\theta,\eta,\mu)=\eta^{\top}Q\eta-\mu$. We provide detailed derivations in Appendix \ref{App_lin_network}.

Since $u$ does not depend on the outcome, the construction is governed by the sensitivities alone. Here the construction terminates at $q=2$, since the $w_j$ are uncorrelated and
$$
b_0=\eta^{\top}Q\eta-\mu,\qquad b_1=2Q\eta,\qquad b_2=2\operatorname{vec}\left(Q\right),\qquad b_j=0\ \text{ for }j\geq3 .
$$
With $w_1=\sigma^{-2}x^{\top}(y-x\eta)$ and $\mathbb{E}_{x;\theta,\eta}[w_1w_1^{\top}]=\sigma^{-2}x^{\top}x$, the Neyman construction gives
\begin{equation}
	u_1^{*}=\eta^{\top}Q\eta-\mu+2\,\eta^{\top}Q(x^{\top}x)^{-1}x^{\top}(y-x\eta),
	\label{eq_akm_u1}
\end{equation}
which still involves $\eta$. This moment is moreover of no practical use when $\widehat\eta$ is the least-squares estimator computed on the same sample: then $x^{\top}(y-x\widehat\eta)=0$, the correction vanishes, and $u_1^*$ reduces to the uncorrected plug-in $\widehat\eta^{\top}Q\widehat\eta$.

Imposing second-order orthogonality changes the picture. Adding $$w_2=\operatorname{vec}\left(-\sigma^{-2}x^{\top}x+\sigma^{-4}x^{\top}(y-x\eta)(y-x\eta)^{\top}x\right)$$ to the basis, the order-two moment is
\begin{equation}
	u_2^{*}=y^{\top}x(x^{\top}x)^{-1}Q(x^{\top}x)^{-1}x^{\top}y-\sigma^2\operatorname{tr}\left(Q(x^{\top}x)^{-1}\right)-\mu,
	\label{eq_akm_u2}
\end{equation}
which no longer depends on $\eta$ and is unbiased at every value of it. Moreover, $u_2^*$ is fully orthogonal. Note the contrast with Example \ref{example_1}: while no fully-orthogonal moment exists for the average effect in the panel logit model with a continuous covariate, in model (\ref{eq_AKM}) quadratic forms in $\eta$ admit fully-orthogonal moments. We recognize the trace-corrected variance-component expression of \citet{AndrewsMartynGillSchankUpward}, generalized by \citet{KlineSaggioSoelvsten2020} to the non-Gaussian, heteroskedastic case.\footnote{A second-order projection also provides an unbiased estimator of $\sigma^2_0$ itself, which we have taken as known. Writing $M_x=I_n-x(x^{\top}x)^{-1}x^{\top}$ and taking $u=\nabla_{\sigma^2}\ln f=-\frac{n}{2\sigma^2}+\frac{1}{2\sigma^4}(y-x\eta)^{\top}(y-x\eta)$, the moment
$$
u^*_2=\frac{1}{2\sigma^4}\left(y^{\top}M_xy-\left(n-d_\eta\right)\sigma^2\right)
$$
has mean zero at every $\eta\in{\cal{H}}$, since $M_x y=\sigma M_x\varepsilon$ does not involve $\eta$ and $\mathbb{E}[\varepsilon^{\top}M_x\varepsilon]=\operatorname{tr}(M_x)=n-d_\eta$. It is thus fully orthogonal. Solving $u_2^*=0$ gives the degrees-of-freedom corrected variance estimator $y^{\top}M_xy/(n-d_\eta)$. See Appendix \ref{App_lin_network}.\label{ftnote_ex3}}

A second special case is the nonlinear model
\begin{equation}
	y=x_{1}\eta_0+\beta_0\,x_{2}(\eta_0\otimes\eta_0)+\sigma_0\varepsilon,
	\quad\varepsilon\mid x\sim\text{iid }{\cal N}(0,I_{n}),
	\label{eq_Tukey}
\end{equation}
where $x=(x_{1},x_{2})$, $\otimes$ denotes the Kronecker product, and $\beta_0$ is scalar. This model contains as a special case the Tukey generalization of \citet{AbowdKramarzMargolis1999} studied in \citet{crippa2025identification}, here for Gaussian errors, and reduces to (\ref{eq_AKM}) when the complementarity parameter $\beta_0$ equals zero.

Let us focus on the variance component $\mu_0=\eta_0^{\top}Q\eta_0$. Since the sensitivities $b_j$ depend on the target alone, they are as above. However, the Bhattacharyya basis changes relative to the linear model. Writing $m(\eta)=x_1\eta+\beta x_2(\eta\otimes\eta)$ and $z=(y-m(\eta))/\sigma$, the Jacobian
$$
M(\eta)=\left(\nabla_{\eta}m(\eta)\right)^{\top}=x_1+\beta\,x_2\left(I_{d_\eta}\otimes\eta+\eta\otimes I_{d_\eta}\right)
$$
now depends on $\eta$, and $\nabla^2_{\eta}m$ is a nonzero constant proportional to $\beta$. Following Example \ref{example_3} in Subsection \ref{sec_higher_ex}, $w_1=\sigma^{-1}M(\eta)^{\top}z$ and $w_2$ is the corresponding combination of first- and second-order Hermite polynomials in $z$, with coefficients built from $M(\eta)$ and $\nabla^2_{\eta}m$. The Gram matrix of $W_2$ and the cross moments follow in closed form from the orthogonality of the Hermite polynomials, so that the order-two moment $u_2^*$ in (\ref{eq_qorth_explicit_target}) is available in closed form, and the same holds at higher orders. Unlike (\ref{eq_akm_u2}), however, $u_2^*$ still depends on $\eta$ through $M(\eta)$, so that estimation requires a first-step estimate and sample splitting. \citet{crippa2025identification} constructs fully-orthogonal moments for $\beta_0$ in the special case of the Tukey model. Whether informative fully-orthogonal moments exist for quadratic forms $\mu_0$ in (\ref{eq_Tukey}) is left for future work.

\section{Conclusion\label{sec_conclusion}}
\setcounter{equation}{0}

We have described three constructions that make a moment function insensitive to the presence of nuisance parameters in a nonlinear model. The first is Neyman orthogonality, the second is orthogonality to order $q$, and the third is full orthogonality. All three apply the same principle. In each case the moment function is projected on a subspace of functions of the data determined by the likelihood, and the residual is retained. The three differ only in the choice of subspace, which is spanned by the nuisance score, by the first $q$ generalized scores, and by all the likelihood ratios, respectively. These subspaces are nested and, under an analyticity condition, orthogonality to every order implies full orthogonality, so that the three notions form a single increasing sequence of projections. We have also shown that the projection delivering full orthogonality is Neyman's own construction, applied to a model in which the distribution of the nuisance parameter is itself the nuisance parameter. Finally, we have shown that the projection approach extends naturally to target parameters that are functionals of the nuisance parameters and of the distribution of the covariates.

Our examples illustrate when each notion is available. In the panel Poisson model, Neyman orthogonality already delivers a fully-orthogonal moment. In the panel logit, taking $q=T$ returns the conditional-likelihood score. For average effects of a continuous covariate in that model no fully-orthogonal moment exists, and orthogonality to a finite order is the only route available among these constructions. Choosing the order $q$ in practice remains an open question.

Developing methods to construct orthogonal moments outside likelihood models is an important direction for future research. Neyman orthogonality extends naturally to this case, as shown by an active recent literature on double/debiased machine learning. Recent research has extended higher-order orthogonality to moment condition models without a likelihood structure \citep{chetverikov2026triple,bonhomme2026higher}. More work in this direction is needed to better understand the properties of estimators based on orthogonal moments in settings such as panels, networks, and non- and semiparametric estimation problems where nuisance parameters are high-dimensional.

\clearpage

\ifx\undefined\BySame
\newcommand{\BySame}{\leavevmode\rule[.5ex]{3em}{.5pt}\ }
\fi
\ifx\undefined\textsc
\newcommand{\textsc}[1]{{\sc #1}}
\newcommand{\emph}[1]{{\em #1\/}}
\let\tmpsmall\small
\renewcommand{\small}{\tmpsmall\sc}
\fi

\appendix
\renewcommand{\theequation}{\thesection.\arabic{equation}}

\clearpage

\begin{center}
{\Large APPENDIX}
\end{center}

\section{Details for the panel data logit model\label{App_binary}}
\setcounter{equation}{0}

\subsection{The second derivative of the expected Neyman-orthogonal moment\label{App_binary_G}}

To show (\ref{eq_logit_G2}), note that $\partial_{\eta'}\pi_t=v_t$ and $\partial_{\eta'}v_t=v_t(1-2\pi_t)$. Applying the quotient rule to $\overline{x}(\eta')=\sum_t v_tx_t/\sum_t v_t$ and collecting terms gives
\begin{equation*}
	\partial_{\eta'}\overline{x}(\eta')=\frac{\sum_{t=1}^{T} v_t\left(1-2\pi_t\right)\left(x_t-\overline{x}\right)}{\sum_{t=1}^{T} v_t}.
\end{equation*}
Differentiating (\ref{eq_logit_G}) twice with respect to $\eta'$ produces three terms. The first one carries $\pi_t(\eta)-\pi_t(\eta')$ as a factor and so vanishes at $\eta'=\eta$. The second one is $2\,\partial_{\eta'}\overline{x}(\eta')\sum_t v_t$, which by the previous display equals twice the right-hand side of (\ref{eq_logit_G2}). The third one is $-\sum_t\left(x_t-\overline{x}\right)\partial_{\eta'}v_t$, which equals minus the right-hand side of (\ref{eq_logit_G2}). Adding the last two terms gives the result. In turn, (\ref{eq_logit_G2_T2}) follows on substituting
$$
x_1-\overline{x}=\frac{v_2\left(x_1-x_2\right)}{v_1+v_2},\qquad x_2-\overline{x}=-\frac{v_1\left(x_1-x_2\right)}{v_1+v_2}
$$
into (\ref{eq_logit_G2}) and using $(1-2\pi_1)v_1v_2-(1-2\pi_2)v_1v_2=2v_1v_2(\pi_2-\pi_1)$.

\subsection{Moments of the sufficient statistic\label{App_binary_moments}}

Carrying out the projection (\ref{eq_logit_q}) in the panel logit model requires the moments of $s=\sum_{t=1}^{T}y_t$, which supply the coefficients of the $w_j$, together with the cross moments $\mathbb{E}_{x;\theta,\eta}[u\,s^{j}]$.

The cumulants of $s$ are obtained by adding up the per-period Bernoulli cumulants, and coincide with the derivatives $\nabla^j_{\eta}\kappa$ of the cumulant generating function of Section \ref{sec_higher_ex}. Writing $v_t=\pi_t(1-\pi_t)$, the four lowest are
\begin{equation*}
	\kappa_1=\sum_{t=1}^T \pi_t,\qquad \kappa_2=\sum_{t=1}^T v_t,\qquad \kappa_3=\sum_{t=1}^T v_t(1-2\pi_t),\qquad \kappa_4=\sum_{t=1}^T v_t(1-6v_t),
\end{equation*}
and the standard moment-cumulant relations turn these into the moments of $s$.

For the cross moments, decompose $s=y_t+s_{-t}$, where $s_{-t}=\sum_{r\neq t}y_r$ is independent of $y_t$, and use $y_t^k=y_t$ for $k\geq1$ to obtain
\begin{equation*}
	\mathbb{E}_{x;\theta,\eta}\left[(y_t-\pi_t)\,s^{j}\right]=v_t\sum_{k=1}^{j}\binom{j}{k}\,\mathbb{E}_{x;\theta,\eta}\left[s_{-t}^{\,j-k}\right],
\end{equation*}
the moments of $s_{-t}$ following from its cumulants in the same way. Summing over $t$ against $x_t$ delivers $\mathbb{E}_{x;\theta,\eta}[u\,s^j]$ for the score $u=\sum_t x_t(y_t-\pi_t)$.

\section{Construction of fully-orthogonal moments\label{App_variational}}
\setcounter{equation}{0}

\subsection{The target parameter $\theta$\label{App_var_theta}}

Given a moment function $u$, and for $x$, $\theta$, $\eta$ fixed, the projection residual $u_{\infty}^*$ in (\ref{eq_fd_proj}) solves 
\begin{align}
	u_{\infty}^*(\cdot,x,\theta,\eta)=&\, \underset{u^*}{\argmin}\, \mathbb{E}_{x;\theta,\eta}\left[\left\|u(y,x,\theta,\eta)-u^*(y)\right\|^2\right],\label{eq_FD1}\\
	& 	\, \text{s.t.}\quad \mathbb{E}_{x;\theta,\widetilde{\eta}}\left[u^*(y)\right]=0 \text{ for all }\widetilde{\eta}\in\mathcal{H}.\label{eq_FD2}
\end{align}

Components of $u$ decouple in the objective and in the constraints alike, so nothing is lost by treating a scalar component. The constraints are associated to a Lagrange multiplier $\lambda(\widetilde\eta)$. We then define the Lagrangian
\begin{equation}
	{\cal L}(u^*,\lambda)
	=\tfrac{1}{2}\,\mathbb{E}_{x;\theta,\eta}\left[\left(u-u^*\right)^2\right]
	-\int_{{\cal H}}\lambda(\widetilde\eta)\,
	\mathbb{E}_{x;\theta,\eta}\left[u^*\,r_{\widetilde\eta}\right]d\widetilde\eta,
	\label{eq_FD_lagrangian}
\end{equation}
each constraint having been expressed under the reference distribution using the change of measure underlying (\ref{eq_full_score}). 

Taken pointwise in $y$, the first-order condition delivers
\begin{equation}
	u_{\infty}^{*}(y,x,\theta,\eta)=u(y,x,\theta,\eta)
	+\int_{{\cal H}}\lambda(\widetilde\eta)\,
	\frac{f(y\mid x;\theta,\widetilde\eta)}{f(y\mid x;\theta,\eta)}\,d\widetilde\eta,
	\label{eq_FD_solution}
\end{equation}
so the program is solved by the original moment plus a correction drawn from ${\cal S}_{x;\theta,\eta}$. 

Finally, imposing the constraints, the Lagrange multiplier solves the following integral equation
\begin{equation}
		\int_{{\cal H}} K(\widetilde\eta',\widetilde\eta)\,\lambda(\widetilde\eta)\,d\widetilde\eta
	=-\,\mathbb{E}_{x;\theta,\widetilde\eta'}\left[u(y,x,\theta,\eta)\right],
	\qquad
	K(\widetilde\eta',\widetilde\eta)
	=\mathbb{E}_{x;\theta,\eta}\left[r_{\widetilde\eta'}\,r_{\widetilde\eta}\right].
	\label{eq_FD_multiplier}
\end{equation}

\begin{remark}[Existence]
Note that (\ref{eq_FD_multiplier}) is a Fredholm integral equation of the first kind. Its kernel $K$ is positive semi-definite and, in general, compact, so that the inverse operator is unbounded and a solution $\lambda$ need not exist (see, e.g., \citealp{engl1996regularization}). This does not affect the existence of $u_{\infty}^*$ itself. The feasible set in (\ref{eq_FD2}) is a closed affine subspace of $L^2\big(f(\cdot\mid x;\theta,\eta)\big)$ and the objective in (\ref{eq_FD1}) is strictly convex, so the program has a unique solution, given by the projection residual (\ref{eq_fd_proj}). What may fail is only the representation (\ref{eq_FD_solution}) of the correction $u_{\infty}^*-u$ as an integral against a multiplier: the projection belongs to the closed span ${\cal S}_{x;\theta,\eta}$, whereas (\ref{eq_FD_solution}) requires it to lie in the span itself.
\end{remark}

\subsection{Other target parameters\label{App_var_mu}}

We now turn to the targets of Section~\ref{sec_average}. The relevant program is
\begin{align}
	u_{\infty}^*(\cdot,x,\theta,\eta,\mu)=&\, \underset{u^*}{\argmin}\, \mathbb{E}_{x;\theta,\eta}\left[\left(u(y,x,\theta,\eta,\mu)-u^*(y)\right)^2\right],\label{eq_FD_average_min}\\
	& 	\, \text{s.t.}\quad \mathbb{E}_{x;\theta,\widetilde{\eta}}\left[u^*(y)\right]=b_0(x;\theta,\widetilde{\eta},\mu) \text{ for all }\widetilde{\eta}\in\mathcal{H},\label{eq_FD_average_c}
\end{align}
which reduces to (\ref{eq_FD1})--(\ref{eq_FD2}) when $b_0\equiv0$.

The Lagrangian and the first-order condition are as before, so that the solution again takes the form
\begin{equation}
	u_{\infty}^{*}(y,x,\theta,\eta,\mu)=u(y,x,\theta,\eta,\mu)+\int_{{\cal H}}\lambda(\widetilde\eta)\,
	\frac{f(y\mid x;\theta,\widetilde\eta)}{f(y\mid x;\theta,\eta)}\,d\widetilde\eta.
	\label{eq_FD_average_lagrange}
\end{equation}
Imposing the constraints, the multiplier now solves
\begin{equation}
	\int_{{\cal H}} K(\widetilde\eta',\widetilde\eta)\,\lambda(\widetilde\eta)\,d\widetilde\eta
	=\mathbb{E}_{x;\theta,\widetilde\eta'}\left[u(y,x,\theta,\widetilde\eta',\mu)-u(y,x,\theta,\eta,\mu)\right],
	\label{eq_FD_average_kernel}
\end{equation}
with $K$ the Gram kernel of (\ref{eq_FD_multiplier}). The right-hand side measures how much the nuisance argument of the original moment matters, relative to the reference value $\eta$ at which the projection is computed; it vanishes when $u$ does not depend on $\eta$, and (\ref{eq_FD_average_kernel}) reduces to (\ref{eq_FD_multiplier}) when $b_0\equiv0$.

The solution is
\begin{equation}
	u_{\infty}^{*}=g_0+\left(u-\operatorname{Proj}_{x;\theta,\eta}\left(u\,\middle\|\,{\cal S}_{x;\theta,\eta}\right)\right),
	\label{eq_FD_average_sol}
\end{equation}
where $g_0\in{\cal S}_{x;\theta,\eta}$ is the unique element (when it exists) satisfying $\mathbb{E}_{x;\theta,\eta}[g_0\,r_{\widetilde\eta}]=b_0(x;\theta,\widetilde\eta,\mu)$ for all $\widetilde\eta\in\mathcal{H}$. The second term in (\ref{eq_FD_average_sol}) is the fully-orthogonal projection residual of Section \ref{sec_fd}.

\begin{remark}[Existence, other targets]
Unlike in (\ref{eq_FD1})--(\ref{eq_FD2}), the feasible set may now be empty: the zero function satisfies (\ref{eq_FD_average_c}) only when $b_0\equiv0$, and $g_0$ exists only if $b_0$ lies in the range of the map $g\mapsto\left(\widetilde\eta\mapsto\mathbb{E}_{x;\theta,\eta}[g\,r_{\widetilde\eta}]\right)$. Existence of a fully-orthogonal moment for the target is thus a genuine restriction on the model, as Example \ref{example_1} illustrates, and it is distinct from the question of whether the multiplier $\lambda$ in (\ref{eq_FD_average_lagrange}) admits an integral representation.
\end{remark}

\section{Monte Carlo simulations\label{app_simu}}
\setcounter{equation}{0}

This appendix collects the designs and estimators underlying the simulation evidence reported in the text.

\paragraph{Common design.} In each experiment there are $N=1000$ units and $T=5$ periods. The covariate of interest is $x_{it}\sim\text{Bernoulli}(0.5)$ drawn independently across $i$ and $t$, and the individual effects are drawn as $\eta_i\sim{\cal N}(0,1)$, independently of the covariates. The parameter $\theta_0$ moves across a grid from $-2$ to $2$ in increments of $0.25$, with $1000$ replications at each point. Sample splitting follows Section~\ref{sec_rationale}, without cross-fitting: each unit contributes its first $T_{\rm split}=\lfloor T/2\rfloor=2$ periods to the estimation of $\eta_i$, and the estimating equation uses the $T_{\rm est}=T-T_{\rm split}=3$ left-out periods. All estimating equations for $\theta$ are solved on $[-5,5]$, and the first-step equations on $[-10,10]$. An estimate is recorded as missing when its estimating equation has no sign change on that interval; means and percentiles are computed over the remaining replications.

\paragraph{Poisson.} The data are generated from (\ref{eq_poisson}). First-step estimates are obtained by maximizing the split-sample likelihood jointly in $(\theta,\eta_1,\ldots,\eta_N)$. Given $\theta$, the individual effects have the closed form
\begin{equation*}
	\widehat\eta_{i}(\theta)=\ln\Big(\sum_{t\leq T_{\rm split}}y_{it}\Big)-\ln\Big(\sum_{t\leq T_{\rm split}}\exp(x_{it}\theta)\Big),
\end{equation*}
and the split-sample estimate of $\theta$ solves the resulting concentrated score, which is the Poisson conditional-likelihood score. We keep $\widehat\eta_i$ evaluated at the split-sample estimate of $\theta$ and discard the latter; the estimating equations are then solved holding $\widehat\eta_i$ fixed. The closed form is finite provided the unit has a positive count in at least one split-sample period, and units with $\sum_{t\leq T_{\rm split}}y_{it}=0$ are dropped; between about $60\%$ ($\theta_0=-2$) and $93\%$ ($\theta_0=2$) of units are retained. We report the plug-in estimator, solving
$$\sum_i\sum_{t>T_{\rm split}}x_{it}\left(y_{it}-\mu_{it}(\theta,\widehat\eta_i)\right)=0,$$
and the Neyman-orthogonal estimator based on (\ref{eq_poisson_orth}), with the weighted covariate mean computed over the left-out periods. By (\ref{eq_poisson_free}), the latter does not involve $\widehat\eta_i$; it is computed on the same retained units.

\paragraph{Logit.} The data are generated from (\ref{eq_logit}), with the same covariate and effect distributions. Split-sample maximum likelihood again provides $\widehat\eta_i$: for each $\theta$, $\widehat\eta_i(\theta)$ solves $\sum_{t\leq T_{\rm split}}(y_{it}-\pi_{it})=0$ by Newton's method, and the split-sample estimate of $\theta$ solves the resulting profile score. As in the Poisson design, we keep $\widehat\eta_i$ evaluated at the split-sample estimate of $\theta$ and discard the latter. For a unit whose split-sample outcomes are all zero or all one the effect is unbounded, and such units are dropped; roughly $40\%$ of units are retained under this design. All estimators are computed on the retained units. Besides the plug-in estimator, we report the order-$q$ orthogonal estimators for $q=1,2,3$ and the conditional-logit estimator. For $q=1$ the moment is (\ref{eq_logit_orth}). For $q\geq2$ we project the estimation-sample score on the centered powers of $S_i=\sum_{t>T_{\rm split}}(y_{it}-\pi_{it})$, computing the moments $\mathbb{E}[S_i^{j}]$ and $\mathbb{E}[u_iS_i^{j}]$ exactly by enumeration over the $2^{T_{\rm est}}$ outcome configurations of a unit, and solving the resulting $q\times q$ system unit by unit. The conditional-logit score conditions on the choice count $\sum_{t>T_{\rm split}}y_{it}$.

\paragraph{A heterogeneous coefficient.} The third design is based on model (\ref{eq_logit_hetero}), where the individual effect is multiplied by a continuously distributed regressor, and the choice probabilities are
\begin{equation}
	\Pr\left(y_{it}=1\mid x_{it},z_{it},\eta_i\right)=\frac{\exp(x_{it}\theta_0+z_{it}\eta_i)}{1+\exp(x_{it}\theta_0+z_{it}\eta_i)},
	\label{eq_logit_z}
\end{equation}
with $x_{it}\sim\text{Bernoulli}(0.5)$ and $z_{it}\sim\text{Uniform}[1/2,3/2]$ drawn independently across $i$ and $t$, $\eta_i\sim{\cal N}(0,1)$, and no intercept. First-step estimation and the drop rule are as in the logit design, with the individual-effect equation $\sum_{t\leq T_{\rm split}}z_{it}(y_{it}-\pi_{it})=0$. We report the order-$q$ orthogonal estimators for $q=2,3$, obtained by projecting the estimation-sample score on the centered powers of $\sum_{t>T_{\rm split}}z_{it}(y_{it}-\pi_{it})$, with the Gram and cross moments again computed exactly by enumeration over the $2^{T_{\rm est}}$ outcome configurations.

\section{Orthogonal moments in the linear model\label{App_lin_network}}
\setcounter{equation}{0}

This appendix derives the expressions stated in Example \ref{example_3} of Section \ref{sec_average}. We work throughout with $y=x\eta+\sigma\varepsilon$, $\varepsilon\mid x\sim\text{iid }{\cal N}(0,I_n)$, taking $x^{\top}x$ nonsingular and $\sigma^2$ known unless stated otherwise; $\eta$ has dimension $r=d_\eta$. Three matrices will prove useful for the derivations. The elimination matrix $L_r$ satisfies $\operatorname{vech}(C)=L_r\operatorname{vec}(C)$; the commutation matrix $K_r$ satisfies $K_r\operatorname{vec}(C)=\operatorname{vec}(C^{\top})$; and for symmetric $C$ the duplication matrix $D_r$ satisfies $D_r\operatorname{vech}(C)=\operatorname{vec}(C)$. On these, see \citet{magnus1979commutation,magnus1980elimination}.

\subsection{Generalized scores and two lemmas\label{App_lin_lemmas}}

The log-likelihood is $\ell=-\frac{n}{2}\ln(2\pi\sigma^2)-\frac{1}{2\sigma^2}(y-x\eta)^{\top}(y-x\eta)$, with derivatives $\nabla_\eta\ell=x^{\top}(y-x\eta)/\sigma^2$ and $\nabla^2_\eta\ell=-x^{\top}x/\sigma^2$. The first- and second-order generalized scores are therefore
\begin{equation}
	w_1=\frac{1}{\sigma^2}\,x^{\top}(y-x\eta),\qquad
	w_2=\operatorname{vech}(V),\qquad V:=-\frac{1}{\sigma^{2}}\,x^{\top}x+\frac{1}{\sigma^{4}}\,x^{\top}(y-x\eta)(y-x\eta)^{\top}x,
	\label{eq_app_W2}
\end{equation}
where $w_2$ collects the distinct entries of the matrix $\nabla^2_\eta\ell+\nabla_\eta\ell\,\nabla_\eta\ell^{\top}$, and we stack $W_2=(w_1^{\top},w_2^{\top})^{\top}$. This differs from the convention used in the main text where $w_2$ collected all, possibly redundant, entries; the two conventions give the same projection, since the redundant entries are linear combinations of the distinct ones.

\begin{lemma}\label{lem_app_gram}
	$\mathbb{E}_{x;\theta,\eta}\left[w_1w_1^{\top}\right]=\frac{1}{\sigma^{2}}\,x^{\top}x$, $\quad\mathbb{E}_{x;\theta,\eta}\left[w_1w_2^{\top}\right]=0$, and
	\begin{equation}
		\mathbb{E}_{x;\theta,\eta}\left[w_2w_2^{\top}\right]=\frac{1}{\sigma^{4}}\,L_r\left(x^{\top}\otimes x^{\top}\right)\left(I_{n^2}+K_n\right)\left(x\otimes x\right)L_r^{\top}.
		\label{eq_app_Ev2v2}
	\end{equation}
\end{lemma}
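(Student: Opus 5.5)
The plan is to rewrite everything in terms of the standardized Gaussian vector $\varepsilon=(y-x\eta)/\sigma\sim{\cal N}(0,I_n)$ and then read off the Gram entries from the first four moments of $\varepsilon$. With this substitution, $w_1=\sigma^{-1}x^{\top}\varepsilon$ and $V=\sigma^{-2}\left(x^{\top}\varepsilon\varepsilon^{\top}x-x^{\top}x\right)$. Hence $\operatorname{vec}(V)=\sigma^{-2}(x^{\top}\otimes x^{\top})\left(\varepsilon\otimes\varepsilon-\operatorname{vec}(I_n)\right)$, using $\operatorname{vec}(ABC)=(C^{\top}\otimes A)\operatorname{vec}(B)$ and $\operatorname{vec}(\varepsilon\varepsilon^{\top})=\varepsilon\otimes\varepsilon$. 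Then $w_2=L_r\operatorname{vec}(V)$.

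The first two claims follow quickly. Since $\mathbb{E}[\varepsilon\varepsilon^{\top}]=I_n$, we get $\mathbb{E}[w_1w_1^{\top}]=\sigma^{-2}x^{\top}x$. For the cross term, $w_1w_2^{\top}$ is a linear map applied to $\varepsilon\left(\varepsilon\otimes\varepsilon-\operatorname{vec}(I_n)\right)^{\top}$. Every entry of this matrix is an odd-degree polynomial in $\varepsilon$, either a third moment or a first moment times a constant. By symmetry of the Gaussian distribution these all have mean zero, so $\mathbb{E}[w_1w_2^{\top}]=0$.

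The main step is the fourth-moment identity
\[
\mathbb{E}\left[(\varepsilon\otimes\varepsilon-\operatorname{vec}(I_n))(\varepsilon\otimes\varepsilon-\operatorname{vec}(I_n))^{\top}\right]=I_{n^2}+K_n,
\]
that is, the covariance of $\varepsilon\otimes\varepsilon$. I would verify it entrywise with Isserlis' theorem. The entry indexed by $((i,j),(k,l))$ equals $\mathbb{E}[\varepsilon_i\varepsilon_j\varepsilon_k\varepsilon_l]-\delta_{ij}\delta_{kl}=\delta_{ik}\delta_{jl}+\delta_{il}\delta_{jk}$. Here $\delta_{ik}\delta_{jl}$ is the $((i,j),(k,l))$ entry of $I_{n^2}$, and $\delta_{il}\delta_{jk}$ is the corresponding entry of $K_n$, since $K_n(a\otimes b)=b\otimes a$.

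The one point needing care is the index-ordering convention of $\operatorname{vec}$ together with the action of $K_n$. Both terms are symmetric in the relevant sense, so I expect no difficulty.

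Finally, I sandwich this identity between $\sigma^{-2}L_r(x^{\top}\otimes x^{\top})$ on the left and its transpose on the right. The transpose of $(x^{\top}\otimes x^{\top})$ is $(x\otimes x)$, which yields exactly (\ref{eq_app_Ev2v2}). As a side remark, the projection built from this Gram matrix is invariant to using $\operatorname{vech}$ instead of $\operatorname{vec}$, as noted before the lemma.
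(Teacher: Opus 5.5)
Your proposal is correct and follows the paper's proof essentially step for step. You use the same substitution $y-x\eta=\sigma\varepsilon$, kill the cross term by vanishing odd Gaussian moments, and sandwich the covariance $I_{n^2}+K_n$ of $\operatorname{vec}(\varepsilon\varepsilon^{\top})$ between $\sigma^{-2}L_r(x^{\top}\otimes x^{\top})$ and its transpose. The only difference is that you verify the Gaussian fourth-moment identity entrywise via Isserlis' theorem, where the paper cites it from the literature; that is a perfectly good self-contained substitute.
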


\begin{proof}[\bf Proof of Lemma \ref{lem_app_gram}]
Substituting $y-x\eta=\sigma\varepsilon$ gives $w_1=x^{\top}\varepsilon/\sigma$ and $V=x^{\top}(\varepsilon\varepsilon^{\top}-I_n)x/\sigma^{2}$. The first claim is immediate from $\mathbb{E}[\varepsilon\varepsilon^{\top}]=I_n$, and the second holds because all third moments of a centered Gaussian vector vanish. For the third claim, $w_2=\sigma^{-2}L_r(x^{\top}\otimes x^{\top})\operatorname{vec}(\varepsilon\varepsilon^{\top}-I_n)$, and the Gaussian fourth-moment identity, equation (4.3) of \citet{ghazal2000second},
\begin{equation*}
	\mathbb{E}\left[\operatorname{vec}(\varepsilon\varepsilon^{\top})\operatorname{vec}(\varepsilon\varepsilon^{\top})^{\top}\right]=\operatorname{vec}(I_n)\operatorname{vec}(I_n)^{\top}+I_{n^2}+K_n,
\end{equation*}
implies $\mathbb{E}[\operatorname{vec}(\varepsilon\varepsilon^{\top}-I_n)\operatorname{vec}(\varepsilon\varepsilon^{\top}-I_n)^{\top}]=I_{n^2}+K_n$, which yields (\ref{eq_app_Ev2v2}).
\end{proof}

\begin{lemma}\label{lem_app_key}
	For symmetric $r\times r$ matrices $A$ and $B$,
	\begin{equation}
		\operatorname{vech}(A)^{\top}\left[L_r\left(x^{\top}\otimes x^{\top}\right)\left(I_{n^2}+K_n\right)\left(x\otimes x\right)L_r^{\top}\right]^{-1}\operatorname{vech}(B)=\tfrac{1}{2}\operatorname{tr}\left(A\left(x^{\top}x\right)^{-1}B\left(x^{\top}x\right)^{-1}\right).
		\label{eq_app_trace}
	\end{equation}
\end{lemma}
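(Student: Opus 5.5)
The plan is to simplify the middle matrix with standard commutation-matrix identities, and then reduce its inverse to a known one using the elimination–duplication calculus of \citet{magnus1979commutation,magnus1980elimination}.

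\textbf{Simplifying the middle matrix.} Write $G=x^{\top}x$, which is nonsingular by assumption. Two identities do the work:
\begin{equation*}
K_n(x\otimes x)=(x\otimes x)K_r, \qquad (x^{\top}\otimes x^{\top})(x\otimes x)=G\otimes G.
\end{equation*}
With these, the bracket in (\ref{eq_app_trace}) becomes
\begin{equation*}
L_r(G\otimes G)(I_{r^2}+K_r)L_r^{\top}.
\end{equation*}
Next we use that $(I_{r^2}+K_r)/2=N_r$ is the symmetrizer, and that $N_r=D_rD_r^{+}$, where $D_r^{+}=(D_r^{\top}D_r)^{-1}D_r^{\top}$. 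Since $G\otimes G$ commutes with $K_r$, the bracket equals
\begin{equation*}
2L_rN_r(G\otimes G)N_rL_r^{\top}=2\,L_rD_r\,D_r^{+}(G\otimes G)D_r^{+\top}\,D_r^{\top}L_r^{\top}.
\end{equation*}

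\textbf{Removing the outer factors and inverting.} The expression above still carries the outer factors $L_rD_r$ on the left and $D_r^{\top}L_r^{\top}$ on the right. We have $L_rD_r=I$, which removes both and leaves $2D_r^{+}(G\otimes G)D_r^{+\top}$. Its inverse is known (Magnus--Neudecker):
\begin{equation*}
\bigl[D_r^{+}(G\otimes G)D_r^{+\top}\bigr]^{-1}=D_r^{\top}(G^{-1}\otimes G^{-1})D_r.
\end{equation*}
If we prefer not to quote this, it can be checked directly using $D_rD_r^{+}=N_r$, $N_r(G\otimes G)=(G\otimes G)N_r$ and $N_rD_r=D_r$.

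\textbf{Converting to a trace.} The left side of (\ref{eq_app_trace}) therefore equals
\begin{equation*}
\tfrac12\operatorname{vech}(A)^{\top}D_r^{\top}(G^{-1}\otimes G^{-1})D_r\operatorname{vech}(B).
\end{equation*}
Because $A$ is symmetric, $D_r\operatorname{vech}(A)=\operatorname{vec}(A)$, and likewise for $B$. The expression becomes $\tfrac12\operatorname{vec}(A)^{\top}(G^{-1}\otimes G^{-1})\operatorname{vec}(B)$. Applying $\operatorname{vec}(A)^{\top}\operatorname{vec}(C)=\operatorname{tr}(A^{\top}C)$ and $(G^{-1}\otimes G^{-1})\operatorname{vec}(B)=\operatorname{vec}(G^{-1}BG^{-1})$ gives $\tfrac12\operatorname{tr}(AG^{-1}BG^{-1})$, as claimed.

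The main obstacle is the bookkeeping in the first two steps. The delicate points are the precise identity $L_rN_r=L_rD_rD_r^{+}=D_r^{+}$, and the fact that $L_r$ is one particular left inverse of $D_r$, not the Moore–Penrose one. I would handle this by checking each identity on $\operatorname{vec}$ of symmetric matrices: the quadratic form is only evaluated on $\operatorname{vech}$ vectors, and $N_r$ projects onto the symmetric subspace. If that route becomes messy, a fallback is to verify the claimed inverse directly: show that $\tfrac12 D_r^{\top}(G^{-1}\otimes G^{-1})D_r$ times $L_r(G\otimes G)(I+K_r)L_r^{\top}$ is the identity on $\operatorname{vech}$ space, by applying the product to $\operatorname{vech}(C)$ for symmetric $C$ and tracking the result to $\operatorname{vech}(C)$.
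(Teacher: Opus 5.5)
Your proof is correct and follows the paper's route. You use the commutation identity (the paper's (\ref{eq_app_P2})) to reduce the bracket to $L_r(G\otimes G)(I_{r^2}+K_r)L_r^{\top}$ with $G=x^{\top}x$, show its inverse is $\tfrac12 D_r^{\top}(G^{-1}\otimes G^{-1})D_r$, and convert to a trace via $D_r\operatorname{vech}(A)=\operatorname{vec}(A)$; the only difference is that you reach the inverse through $N_r=D_rD_r^{+}$, $L_rN_r=D_r^{+}$ and the Magnus--Neudecker formula for $[D_r^{+}(G\otimes G)D_r^{+\top}]^{-1}$, whereas the paper verifies it directly from (\ref{eq_app_P1})--(\ref{eq_app_P4}).
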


\begin{proof}[\bf Proof of Lemma \ref{lem_app_key}]
We use the identities
\begin{align}
	&D_r=\left(I_{r^2}+K_r\right)L_r^{\top}\left(L_r\left(I_{r^2}+K_r\right)L_r^{\top}\right)^{-1},
	\label{eq_app_P1}\\
	&\left(I_{n^2}+K_n\right)\left(x\otimes x\right)=\left(x\otimes x\right)\left(I_{r^2}+K_r\right),
	\label{eq_app_P2}\\
	&D_rL_r\left(I_{r^2}+K_r\right)=I_{r^2}+K_r,
	\label{eq_app_P3}\\
	&K_rD_r=D_r,
	\label{eq_app_P4}
\end{align}
from \citet{magnus1980elimination}. By (\ref{eq_app_P2}), the matrix inside the inverse equals $L_r(x^{\top}x\otimes x^{\top}x)(I_{r^2}+K_r)L_r^{\top}$, and combining (\ref{eq_app_P1})--(\ref{eq_app_P4}) shows that its inverse is $\frac{1}{2}D_r^{\top}\left((x^{\top}x)^{-1}\otimes(x^{\top}x)^{-1}\right)D_r$. Since $D_r\operatorname{vech}(A)=\operatorname{vec}(A)$ for symmetric $A$, the left-hand side of (\ref{eq_app_trace}) equals $$\frac{1}{2}\operatorname{vec}(A)^{\top}\left((x^{\top}x)^{-1}\otimes(x^{\top}x)^{-1}\right)\operatorname{vec}(B)=\frac{1}{2}\operatorname{tr}(A(x^{\top}x)^{-1}B(x^{\top}x)^{-1}).$$
\end{proof}

\subsection{Variance components\label{App_lin_proof}}

Here $\mu_0=\eta_0^{\top}Q\eta_0$ and the base moment is $u=\eta^{\top}Q\eta-\mu$, which does not involve $y$, so that $\mathbb{E}[u\,W_2^{\top}]=u\,\mathbb{E}[W_2^{\top}]=0$ and the construction (\ref{eq_qorth_explicit_target}) reduces to
\begin{equation*}
	u^*_2=u+b_2^{\top}\left(\mathbb{E}\left[W_2W_2^{\top}\right]\right)^{-1}W_2,
\end{equation*}
where $b_2=\big((2Q\eta)^{\top},\operatorname{vech}(2Q)^{\top}\big)^{\top}$ is the column vector collecting the sensitivities of Section \ref{sec_average} in the coordinates of $W_2$: the gradient $\nabla_\eta(\eta^{\top}Q\eta)=2Q\eta$ and, in the vech convention of (\ref{eq_app_W2}), the second-derivative block $\operatorname{vech}(2Q)$. By Lemma~\ref{lem_app_gram} the Gram matrix is block diagonal, so the two blocks contribute separately. The first-order block contributes
\begin{equation*}
	(2Q\eta)^{\top}\left(\frac{x^{\top}x}{\sigma^{2}}\right)^{-1}\frac{x^{\top}(y-x\eta)}{\sigma^{2}}=2\,\eta^{\top}Q(x^{\top}x)^{-1}x^{\top}(y-x\eta),
\end{equation*}
and truncating the construction after this block yields $u_1^*$ in (\ref{eq_akm_u1}). The second-order block contributes, by Lemma~\ref{lem_app_key} with $A=2Q$ and $B=\sigma^4 V$,
\begin{align*}
	&\sigma^{4}\operatorname{vech}(2Q)^{\top}\left[L_r(x^{\top}\otimes x^{\top})(I_{n^2}+K_n)(x\otimes x)L_r^{\top}\right]^{-1}\operatorname{vech}(V)
	\\&=\sigma^{4}\operatorname{tr}\left(Q(x^{\top}x)^{-1}V(x^{\top}x)^{-1}\right),
\end{align*}
and substituting the definition of $V$,
\begin{align*}
	&\sigma^{4}\operatorname{tr}\left(Q(x^{\top}x)^{-1}V(x^{\top}x)^{-1}\right)\\&=(y-x\eta)^{\top}x(x^{\top}x)^{-1}Q(x^{\top}x)^{-1}x^{\top}(y-x\eta)-\sigma^{2}\operatorname{tr}\left(Q(x^{\top}x)^{-1}\right).
\end{align*}
Adding the three pieces and expanding $(y-x\eta)^{\top}x=y^{\top}x-\eta^{\top}x^{\top}x$, the terms involving $\eta$ cancel,
\begin{align*}
	&\eta^{\top}Q\eta+2\,\eta^{\top}Q(x^{\top}x)^{-1}x^{\top}(y-x\eta)+(y-x\eta)^{\top}x(x^{\top}x)^{-1}Q(x^{\top}x)^{-1}x^{\top}(y-x\eta)\\&=y^{\top}x(x^{\top}x)^{-1}Q(x^{\top}x)^{-1}x^{\top}y,
\end{align*}
which gives $u^*_2$ in (\ref{eq_akm_u2}). As a check, substituting $x^{\top}y=x^{\top}x\eta+\sigma x^{\top}\varepsilon$ shows directly that $\mathbb{E}[y^{\top}x(x^{\top}x)^{-1}Q(x^{\top}x)^{-1}x^{\top}y]=\eta^{\top}Q\eta+\sigma^2\operatorname{tr}(Q(x^{\top}x)^{-1})$ for every $\eta$, confirming that $u^*_2$ is fully orthogonal, and, in this case, exactly unbiased.

\subsection{The variance parameter\label{App_lin_sigma}}

We now derive the expression given in footnote \ref{ftnote_ex3}, taking $\theta=\sigma^2$ as the target and $u=\nabla_{\sigma^2}\ell=-\frac{n}{2\sigma^2}+\frac{1}{2\sigma^4}(y-x\eta)^{\top}(y-x\eta)$. Being a score, $u$ is unbiased at all parameter values, so that all sensitivities vanish and the construction of Section \ref{sec_higher_constr} applies, with $u^*_2=u-\mathbb{E}[u\,W_2^{\top}]\left(\mathbb{E}[W_2W_2^{\top}]\right)^{-1}W_2$.

Writing $u=(\varepsilon^{\top}\varepsilon-n)/(2\sigma^2)$, the first-order block vanishes because $\mathbb{E}[u\,w_1^{\top}]$ involves third moments of $\varepsilon$. For the second-order block, $\mathbb{E}[\varepsilon_i\varepsilon_j\varepsilon_k\varepsilon_l]=\delta_{ij}\delta_{kl}+\delta_{ik}\delta_{jl}+\delta_{il}\delta_{jk}$ gives $\mathbb{E}[(\varepsilon^{\top}\varepsilon)\varepsilon\varepsilon^{\top}]=(n+2)I_n$, hence $\mathbb{E}[(\varepsilon^{\top}\varepsilon-n)\,\varepsilon\varepsilon^{\top}]=2I_n$ and
\begin{equation*}
	\mathbb{E}\left[u\,V\right]=\frac{1}{2\sigma^{4}}\,x^{\top}\,\mathbb{E}\left[(\varepsilon^{\top}\varepsilon-n)\left(\varepsilon\varepsilon^{\top}-I_n\right)\right]x=\frac{1}{\sigma^{4}}\,x^{\top}x .
\end{equation*}
Applying Lemma~\ref{lem_app_key} with $A=x^{\top}x/\sigma^{4}$ and $B=\sigma^{4}V$, the second-order block contributes $\tfrac{1}{2}\operatorname{tr}\left(V(x^{\top}x)^{-1}\right)$, so that
\begin{equation*}
	u^*_2=u-\tfrac{1}{2}\operatorname{tr}\left(V(x^{\top}x)^{-1}\right).
\end{equation*}
Finally, writing $P_x=x(x^{\top}x)^{-1}x^{\top}$ and $M_x=I_n-P_x$, the definition of $V$ gives $\operatorname{tr}(V(x^{\top}x)^{-1})=\left(\varepsilon^{\top}P_x\varepsilon-d_\eta\right)/\sigma^{2}$, and since $M_xx=0$ implies $y^{\top}M_xy=\sigma^{2}\varepsilon^{\top}M_x\varepsilon$,
\begin{equation*}
	u^*_2=\frac{\varepsilon^{\top}\varepsilon-n}{2\sigma^{2}}-\frac{\varepsilon^{\top}P_x\varepsilon-d_\eta}{2\sigma^{2}}
	=\frac{\varepsilon^{\top}M_x\varepsilon-\left(n-d_\eta\right)}{2\sigma^{2}}
	=\frac{1}{2\sigma^{4}}\left(y^{\top}M_xy-\left(n-d_\eta\right)\sigma^{2}\right),
\end{equation*}
as claimed. It is free of $\eta$ and has mean zero at every value of it, since $\mathbb{E}[\varepsilon^{\top}M_x\varepsilon]=\operatorname{tr}(M_x)=n-d_\eta$.

\end{document}